\newtheorem{definition}{Definition}[section]
\newtheorem{remark}{Remark}[section]
\newtheorem{theorem}{Theorem}[section]
\newtheorem{corollary}{Corollary}[section]
\newtheorem{proposition}{Proposition}[section]
\def\be{\begin{equation}}
\def\ee{\end{equation}}
\def\bea{\begin{eqnarray}}
\def\eea{\end{eqnarray}}
\begin{document}

\title{Length-preserving biconnection gravity and its cosmological implications}
\author{Lehel Csillag}
\email{lehel.csillag@ubbcluj.ro, lehel.csillag@unitbv.ro, lehel@csillag.ro}
\affiliation{Department of Physics, Babeș-Bolyai University, Kogălniceanu Street,
	Cluj-Napoca 400084, Romania,}
\affiliation{Department of Mathematics and Computer science, Transylvania University, Iuliu Maniu Street 50, Brașov 500091, Romania}
\author{Rattanasak Hama}
\email{rattanasak.h@psu.ac.th}
\affiliation{Faculty of Science and Industrial Technology, Prince of Songkla University,
Surat Thani Campus, Surat Thani, 84000, Thailand,}

\author{M\'{a}t\'{e} J\'{o}zsa}
 \email{mate.jozsa@ubbcluj.ro}
 \affiliation{Department of Physics, Babeș-Bolyai University, Kogălniceanu Street,
	Cluj-Napoca 400084, Romania,}
\author{Tiberiu Harko}
\email{tiberiu.harko@aira.astro.ro}
\affiliation{Department of Physics, Babeș-Bolyai University, Kogălniceanu Street,
	Cluj-Napoca 400084, Romania,}
\affiliation{Astronomical Observatory, 19 Cireșilor Street,
	Cluj-Napoca 400487, Romania,}
 
\author{Sorin V. Sabau}
\email{sorin@tokai.ac.jp}
\affiliation{School of Biological Sciences, Department of Biology and Graduate School of Science and Technology, Physical and Mathematical Sciences, Tokai University, Sapporo, 005-8600, Japan}

\date{\today }

\begin{abstract}
{We consider a length preserving biconnection gravitational theory, inspired by information geometry, which extends general relativity, by using the mutual curvature as the fundamental object describing gravity. The two connections used to build up the theory are the Schr\"{o}dinger connection, and its dual. In our geometric approach it can be seen that the dual of a non-metric Schr\"odinger connection  possesses torsion, even if the Schr\"odinger connection itself does not, and consequently the pair $(M,g,\nabla^{*})$ is a quasi-statistical manifold. The field equations are postulated to have the form of the standard Einstein equations},  but with the Ricci tensor- and scalar replaced with the mutual curvature tensor, and the mutual curvature scalar, resulting in additional torsion-dependent terms. The covariant divergence of the matter energy-momentum does not vanish in this theory. We derive the equation of motion for massive particles, which shows the presence of an extra force, depending on the torsion vector. The Newtonian limit of the equations of motion is also considered. We explore the cosmological implications by deriving the generalized Friedmann equations for the Friedmann-Lemaitre-Robertson-Walker (FLRW geometry). They contain additional terms that can be interpreted as describing an effective, geometric type dark energy. We examine two cosmological models: one with conserved matter, and one where dark energy and pressure are related by a linear equation of state. The predictions of both models are compared with  a set of observational values of the Hubble function, and with the standard $\Lambda$CDM model. Length-preserving biconnection gravity models fit well the observational data, and also align with $\Lambda$CDM at low redshifts $(z<3)$. {The obtained results suggest that a modified biconnection geometry could explain the late-time acceleration through an effective geometric dark energy, as well as the formation of the supermassive black holes, as they predict a different age of our Universe as compared to standard cosmology}.
\end{abstract}

\maketitle

\tableofcontents

\newpage
\section{Introduction}

The advent of the theory of general relativity \cite{Ein,Hilb} opened not only new perspectives in the understanding of gravity, one of the fundamental forces of nature, but also led to creative interplay and interaction between mathematics and physics. General relativity, based on a Riemannian geometric mathematical structure influenced the development of mathematics, leading to the creation of new research fields that also found many physical applications. Three years after general relativity was formulated as a consistent theory, Weyl \cite{Weyl1, Weyl2} proposed the first generalization of the Riemannian  geometry, based on the introduction of a new geometric concept, the nonmetricity of the spacetime.

For a long time Weyl's geometry did not find many applications in physics \cite{Scholz}, but it is presently an active field of research \cite{W1,W2}.  In the same year Weyl proposed his unified field theory based on the nonmetric geometry, Finsler \cite{Fins} introduced another extension of the Riemann geometry, in which the metric tensor is a function of both the coordinates and of a tangent vector. Finsler geometry has also the potential of opening some new directions in the understanding of quantum mechanics \cite{Fins1}, and of the gravitational phenomena \cite{Fins2}. Another important geometric concept is the torsion tensor of the spacetime, introduced by \'{E}lie Cartan \cite{C1,C2,C3,C4}, and which is at the basis of the Einstein-Cartan theory of the gravitational interaction \cite{C5}.

Among the interesting mathematical developments that greatly influenced physics one must also mention the work by Weitzenb\"{o}ck \cite{We1},  who introduced a geometry in which torsion exactly compensates curvature, and thus the spacetime is becoming flat. Weitzenb\"{o}ck's geometry represents the mathematical foundation of  $f(T)$ gravity \cite{We2}, and of its generalizations \cite{We3}. On the other hand, nonmetricity is the fundamental geometric quantity on which the $f(Q)$ gravity theories \cite{fQ1,fQ2,fQ3} are built up.

General relativity stands out in its description of the gravitational interaction compared to other fundamental interactions in particle physics. Unlike theories where forces result from the exchange of particles, like in electrodynamics, general relativity describes gravity as a property of spacetime geometry. Einstein's differential geometric description of gravitational interaction not only helped  to better understand  Riemannian geometry, but it also opened new possibilities for its generalization.

General relativity has been extensively tested at various length scales, and it has significant applications in astrophysics and cosmology. It gives an excellent description of the of the gravitational dynamics at the scale of the Solar System, and
explains well the perihelion precession of the planet Mercury, the bending of light by the Sun, and the Shapiro time delay effect \cite{O1,O2}. The detection of the gravitational waves \cite{O3,O4}  has  confirmed again the predictions of general relativity, and it has provided a new perspective for the analysis and description of the black hole - black hole, or black hole - neutron star merging processes.

The observational study  of the fluctuations in the temperature distribution of the Cosmic Microwave Background Radiation recently performed by the Planck satellite \cite{Pl1,Pl2}, as well as the investigations of the light curves of the Type Ia supernovae \cite{S1}, have convincingly shown that the present day Universe is in a state of accelerating cosmological expansion. Precise cosmological observations have also shown only around 5\% of the total matter-energy composition of the Universe amounts to  baryonic matter, with 95\% consisting  of two other, essentially unknown, constituents, commonly called dark energy, and dark matter, respectively. 

To describe phenomenologically the observational data obtained from the cosmological observations, the $\Lambda$CDM ($\Lambda$ Cold Dark Matter) model was proposed. This model is obtained by adding to the standard gravitational field equations $G_{\mu \nu}=\left(8\pi G/c^4\right)T_{\mu \nu}$, where $G_{\mu \nu}$ is the Einstein tensor, and $T_{\mu \nu}$ is the matter energy-momentum tensor, the cosmological constant $\Lambda$, first introduced by Einstein in 1917 in the gravitational field equations \cite{EinL} to obtain a static cosmological model of the Universe. However, after the discovery of the expansion of the Universe, Einstein dismissed the possibility of the presence of $\Lambda$.  Despite this, the $\Lambda$CDM model gives a very good description of the cosmological observational data at low redshifts, and thus it is considered as the standard cosmological paradigm of the present times.

The first problem the $\Lambda$CDM must face is related to the unknown nature, and physical/geometrical interpretation of the cosmological constant, which represents the so-called cosmological constant problem \cite{CC1,CC2}, whose solution is not yet known. An alternative view, which does not require the presence of $\Lambda$, is represented by the assumption that the Universe is filled with two components (of unknown physical origin), called dark energy and dark matter, respectively (for reviews of the dark anergy and dark matter problems see \cite{R1,R2,R3,R4,R5}). The $\Lambda$CDM model is a particular dark energy model, in which the effective pressure $p_{DE}$ and the effective energy $\rho_{DE}$ of the dark energy satisfies the equation of state $\rho_{DE}+p_{DE}=0$, $\forall t$.

The $\Lambda$CDM model faces several other important challenges. One of them, called the Hubble tension, is related to the existence of significant differences between the expansion rate of the Universe (the Hubble function) as obtained from the Cosmic microwave Background Radiation satellite observations, and from the low redshift determinations \cite{Val}.  As measured by the Planck satellite, the Hubble constant $H(0)$ has the value of $66.93 \pm 0.62$ km/ s/ Mpc \cite{Val}, while from the SHOES collaboration analysis its value is
$73.24 \pm 1.74$ km/ s/ Mpc \cite{Val}. Between these two values there is a difference which is more than $3\sigma$ \cite{Val}. If real, the Hubble tension strongly points  towards the need of extending, or even replacing the $\Lambda$CDM model.

A more significant problem is the James Webb Space Telescope (JWST) discovery of well-formed galaxies and supermassive black holes only a few hundred million years after the Big Bang \cite{Hain}. Unusually bright galaxy candidates have been detected at $z\approx 16$ \cite{Hari}. Additionally, polycyclic aromatic hydrocarbons (PAHs) have been identified at $z = 6.71$ \cite{Mun}.  The observed bright UV-irradiation of the early Universe suggests a reionization history that is much too short to satisfy
evolution of the hydrogen ionization fraction, $\chi_{HI}(z)$.  All these observational results  seriously challenge the timeline predicted by $\Lambda$CDM model.

The analysis of the measurements of the baryon acoustic oscillations (BAO) by the Dark Energy Spectroscopic Instrument (DESI) points towards a time-evolving equation of state of the dark energy \cite{Desi}. The BAO results have been obtained by using quasar and Lyman-$\alpha$ forest tracers in seven redshift bins in the redshift range $0.1 < z < 4.2$. While the DESI BAO data can be explained by the $\Lambda$CDM model, combining them with other observational datasets leads to results that contradict the standard cosmological scenario.

An attractive possibility of explaining the above problems of cosmology is to assume that general relativity, which can give an excellent description of the gravitational physics at the level of the Solar System is no longer valid on galactic or cosmological scales, and, in order to understand physics on large and very large scales a new theory of gravity is necessary. Many modified theories of gravity have been proposed (for reviews see \cite{MG1,MG2,MG3, MG4,MG5,MG6}), and they try to explain the gravitational phenomenology from different perspectives, by using new geometrical or physical structures that could explain the observational data. Extensions and maximal extensions of the Hilbert-Einstein action $S=\int{\left(R/2\kappa ^2+L_m\right)\sqrt{-g}d^4x}$ were considered in the framework of the $f(R)$ \cite{Bu}, $f\left(R,L_m\right)$ \cite{fRLm}, $f(R,T)$ theories \cite{fRT}, or of the Hybrid Metric Palatini Gravity theory \cite{Hyb}. For a review of the modified gravity theories with geometry-matter coupling see \cite{book}. Finsler type geometrical extensions of general relativity were considered in \cite{Fi1,Bouali_2023,Fins2}, while the astrophysical and cosmological effects of the Weyl geometry were investigated in \cite{Wey1,Wey2,Wey3,We4}.

An interesting approach to Weyl geometry was proposed by Schr\"{o}dinger \cite{Schrod}, who introduced a new type of connection. Although it contains nonmetricity, the Schr\"{o}dinger connection preserves the length of vectors under autoparallel transport. The physical and cosmological implications of the Schr\"{o}dinger connection were investigated in \cite{Klemm, Ming2024,csillag2024schrodinger}. {In \cite{Ming2024} the physical and cosmological implications of a Schr\"{o}dinger type connection of the form 
$\Gamma^\lambda_{~\mu\nu}=\overset{\circ}{\Gamma}\tensor{}{^\lambda_{\mu\nu}}-g^{\rho\lambda}Q_{\rho\mu\nu}$, where $\overset{\circ}{\Gamma} \tensor{}{^\lambda_{\mu\nu}}$ is the Levi Civita connection, while $Q_{\rho \mu \nu}$ is the nonmetricity, were investigated by using the variational principle for
\begin{align}\label{actionS}
S=\frac{1}{16\pi}\int d^4x& \sqrt{-g}\bigg(R+\frac{5}{24}Q_\rho Q^\rho+\frac{1}{6}\tilde{Q}_\rho\tilde{Q}^\rho+2T_\rho Q^\rho\nonumber\\
&+\zeta^{\rho\sigma}_{~~\alpha}T^\alpha_{~\rho\sigma}\bigg)+\int d^4x\sqrt{-g}L_m,
\end{align}
where $T^\mu_{~\rho\nu}$ is the torsion tensor, $T_\rho = T^\sigma_{~\rho\sigma}$, $\zeta^{\rho\sigma}_{~~\alpha}$ is a Lagrange multiplier, and $Q_\mu$ and $\tilde{Q}_\mu=-(1/2)Q_\mu$ are the nonmetricity vectors. The variation of the action with respect to the metric gives the field equations of the Weyl-Schr\"{o}dinger gravity as \cite{Ming2024}
\bea\label{eqWS}
\hspace{-0.5cm}\overset{\circ}{R}_{\mu\nu}-\frac{1}{2}\overset{\circ}{R}g_{\mu\nu}&-&\frac{2}{9}Q_\rho Q^\rho g_{\mu\nu}-\frac{11}{18}Q_\mu Q_\nu+\frac{2}{3}g_{\mu\nu}\overset{\circ}{\nabla}_\rho Q^\rho \nonumber\\
\hspace{-0.5cm}&+&\frac{1}{6}\left(\overset{\circ}{\nabla}_\nu Q_\mu+\overset{\circ}{\nabla}_\mu Q_\nu\right)=8\pi T_{\mu\nu},
\eea 
where $\overset{\circ}{R}_{\mu\nu}$ and $\overset{\circ}{R}$ are the Riemannian Ricci tensor and scalar, respectively, and $\overset{\circ}{\nabla}$ is the covariant derivative defined with respect to the Levi-Civita connection.
}

 The Friedmann-Schouten geometry and connection, in which the torsion has a specific form, and can be expressed in terms of a torsion vector, was investigated, from the point of view of the cosmological applications, in \cite{csillag2024semisymmetric}. Interestingly, symmetrizing a semi-symmetric type of torsion in a certain way, could lead to a Schrödinger connection, as pointed out in \cite{Klemm, csillag2024schrodinger}.

One of the open problems in present day theoretical physics is the problem of quantum gravity. Even a general theory of the quantized gravitational field, or of quantum geometry, does not yet exist, some relevant insights on the expected structure of the theory can be obtained from general physical considerations. Such a general prediction of quantum gravity models is the existence of a minimum length scale, which is assumed to be of the order of the Planck length $\sqrt{\hbar G/c^3}$ \cite{1,2, 2a}. In some quantum gravity models a minimum momentum scale is also considered \cite{3}, and a noncommutative approach to geometry may also be necessary \cite{NC}. 

An interesting approach to the problem of quantum geometries was proposed in \cite{4,4a}. In this model each point $\vec{r}$ in the classical background is associated with a vector $\left|g_{\vec{r}}\right>$ in a Hilbert space, $\vec{r}\rightarrow \left|g_{\vec{r}}\right>$, with
\begin{eqnarray} \label{|g_r>}
\left|g_{\vec{r}}\right> := \int g(\vec{r}{\, '}-\vec{r}) \left|\vec{r}{\, '}\right>  d^3\vec{r}{\, '} \, ,
\end{eqnarray}
where $g(\vec{r}{\, '}-\vec{r})$ is any normalised function. Hence, in this formulation of geometry of quantum mechanics, geometric (or physical) ``points'' in the background space do exist in a superposition of states, and as the result of measurements they may undergo {\it stochastic fluctuations}. Hence, the geometry of quantum mechanics can be formulated by associating to each spacetime point a {\it statistical distribution}.

From a mathematical point of view such mathematical structures are well known, and they are called {\it statistical manifolds} (for a detailed discussion of the subject see the books by Amari \cite{5,5a}). Statistical manifolds represent an application of Riemannian geometry to the study of stochastic processes. The concept was initially introduced by Lauritzen \cite{6}, and it was later reformulated by Kurose in \cite{7} from the perspective of the affine differential geometry. For short, a statistical manifold $(M, \nabla,h)$ is a (semi-)Riemannian manifold $(M, h)$ with a torsion-free affine connection $\nabla$ with $\nabla h$  totally symmetric \cite{8}. For a statistical manifold  a pair of mutually dual affine connections can be naturally defined. Thus, if a statistical manifold $(M, \nabla, h)$ is given, and we denote by $\nabla ^{*}$ the dual connection of $\nabla$ with respect to $h$, the triplet $( M, \nabla ^{*}, h)$ is also a statistical manifold. $( M, \nabla ^{*}, h)$ is called the dual statistical manifold of $(M, \nabla,  h)$ \cite{8}.

Statistical manifolds and dual affine connections were rediscovered in statistics in order to build up geometric theories for statistical inferences \cite{5,5a}. Presently,  this geometric method is called information geometry, and is applied in various fields of mathematical sciences \cite{9,10}.

{Kurose \cite{kurose2007} and Matsuoze \cite{8} pointed out that for the description of certain quantum effects, the inclusion of torsion might be needed due to the non-commutativity of quantum mechanics. With this motivation at hand, Kurose introduced the notion of a statistical manifold admitting torsion, or a quasi-statistical manifold. However, there is a sharp contrast in this case, compared to the case of simple statistical manifolds. Given a quasi-statistical manifold $(M,g,\nabla)$, if we replace the $\nabla$ with its dual, generally, it will not be true that $(M,g,\nabla^{*})$ is a quasi-statistical manifold as well. The exact mathematical conditions, which provide the conditions on $T$ and $T^{*}$ for the dual to be a quasi-statistical manifold are given in corollary \ref{corollarytorsionalmanifolds} { (which is also present in different notations in the mathematical paper \cite{Blaga2022}}). {In some more detail, if the connection $\nabla$ is torsion-free, then the pair $(M,g,\nabla^{*})$ formed by the dual of the connection is a quasi-statistical manifold, but $T^{*}$ need not necessarily be torsion-free, and conversely, if $\nabla^{*}$ is torsion-free, then the pair $(M,g,\nabla)$ is a quasi-statistical manifold, but $T$ does not necessarily have to be torsion-free.}}

Although the link to statistical manifolds and probabilistic geometry was only recently discovered (as we will explain later), gravitational theories based on two connections have already been investigated \cite{Khosravi_multi, Khosravi_2014, Khosravi_2015, Tamanini:2013mg, Tamanini:2012hg}. In \cite{Khosravi_multi} the geodesic
equation was generalized to
\begin{eqnarray}
\frac{d^2x^\mu}{d\lambda^2}&=&-\sum_{i=1}^N
{^{(i)}}\Gamma^{\mu}_{\alpha\beta}\frac{dx^\alpha}{d\lambda}\frac{dx^\beta}{d\lambda}
=
-N {\overline{\Gamma}^{\mu}_{\alpha\beta}}\frac{dx^\alpha}{d\lambda}\frac{dx^\beta}{d\lambda},
\end{eqnarray}
where  $(i)$ labels the number of connections, and the
{weighted} average connection is defined as ${\overline{\Gamma}^{\mu}_{\alpha\beta}}\equiv
\frac{1}{N}\sum_{i=1}^N \limits {^{(i)}}\Gamma^{\mu}_{\alpha\beta}$. For gravitational applications the Hilbert-Einstein action is generalized as
$
{\cal{S}}=\int d^4x \sqrt{-g}g^{\mu\nu}\frac{1}{N}\sum_{i=1}^N
R_{\mu\nu}\left(^{(i)}\Gamma^\rho_{\alpha\beta}\right),
$
which in the case of a biconnection model reduces to
$
{\cal{S}}=\int d^4x {\cal{L}}=\int d^4x
\sqrt{g}g^{\mu\nu}\frac{1}{2}\left[
R_{\mu\nu}\left(^{(1)}\Gamma^\rho_{\alpha\beta}\right)+
R_{\mu\nu}\left(^{(2)}\Gamma^\rho_{\alpha\beta}\right)\right]$ \cite{Khosravi_multi},
which can be reformulated as
\begin{eqnarray}\label{indices3}
{\cal{L}}=\sqrt{g} g^{\mu\nu}
\left[R_{\mu\nu}\left({\overline{\Gamma}^\rho_{\alpha\beta}}\right)+\Omega^{\alpha}_{\alpha\lambda}
\Omega^{\lambda}_{\nu\mu}-\Omega^{\alpha}_{\nu\lambda}\Omega^{\lambda}_{\alpha\mu}\right],
\end{eqnarray}
where $R_{\mu\nu}(\gamma^\rho_{\alpha\beta})$ denotes the Ricci tensor constructed from
 the average connection ${\overline{\Gamma}^\rho_{\alpha\beta}}\equiv\frac{1}{2}\left(^{(1)}\Gamma^\rho_{\alpha\beta}+
^{(2)}\Gamma^\rho_{\alpha\beta}\right)$,   and
$\Omega^\rho_{\alpha\beta}\equiv\frac{1}{2}\left(^{(1)}\Gamma^\rho_{\alpha\beta}-
^{(2)}\Gamma^\rho_{\alpha\beta}\right)$, is a tensor obtained from the transformation rule of the connections. {It is important to mention that in this approach, both the connections were assumed to be symmetric, hence the $\Omega$ tensors are symmetric in their lower indices.} The approach initiated in \cite{Khosravi_multi} was generalized in \cite{Khosravi_2014} by assuming that the two connections are defined in a Weyl geometry, and thus they satisfy the relations $^{(1)}\nabla _\mu g_{\alpha \beta}=-C_\mu g_{\alpha \beta}$, and $^{(2)}\nabla _\mu g_{\alpha \beta}=+C_\mu g_{\alpha \beta}$, respectively. The Weyl biconnection model is a natural framework to generate the mathematical structure of a Galileon theory. This model also admits a self-accelerating solution, and is closely related to massive gravity in the multiconnection framework.

{{In \cite{Iosifidis2023} special type of biconnection a biconnection theory was studied, described by the action $S=(1/4\kappa)\int{\left(R^{(1)}+R^{(2)}+K\right)\sqrt{-g}d^4x}$, with $K:=K^{\lambda \mu \nu}K_{\mu \nu \lambda}-K^{\lambda \mu}_{\;\;\;\;\mu}K^{\lambda \nu}_{\;\;\;\nu}$ being the difference scalar. In this framework, the connections are sourced by hypermomentum, exhibit zero torsion, and have a completely symmetric nonmetricity. However, when considering the vacuum case, where the coupling between connection and matter (via hypermomentum) is not included, this theory is indistinguishable from GR.}
{
Once the matter-connection coupling is taken into account, the theory introduces two hypermomenta, defined as: $\Delta _\lambda ^{\;\;\;\mu \nu (1)}=-2/\sqrt{-g} \delta S_m/\delta \Gamma ^{\lambda\;\;\;\;\;(1)}_{\;\;\;\mu \nu}=\Xi _\lambda ^{\;\;\mu \nu}$, and $\Delta _\lambda ^{\;\;\;\mu \nu (2)}=-2/\sqrt{-g} \delta S_m/\delta \Gamma ^{\lambda\;\;\;\;\;(2)}_{\;\;\;\mu \nu}=-\Xi _\lambda ^{\;\;\mu \nu}$, respectively. Assuming the hypermomentum tensor $\Xi_{\alpha \mu \nu}$ is totally symmetric and traceless, it follows that the connection coefficients can be written as $\Gamma ^{\lambda\;\;\;\;(1)} _{\;\;\; \mu \nu}=\overset{\circ}{\Gamma} \tensor{}{^\lambda _\mu _\nu} +\kappa \Xi ^{\lambda}_{\;\;\mu \nu}$, and $\Gamma ^{\lambda\;\;\;\;(2)} _{\;\;\; \mu \nu}=\overset{\circ}{\Gamma} \tensor{}{^\lambda _\mu _\nu} -\kappa \Xi ^{\lambda}_{\;\;\mu \nu}$, respectively. The field equations for this model take the form \cite{Iosifidis2023}
\be
\overset{\circ}{R}_{\mu \nu}-\frac{1}{2}g_{\mu \nu}\overset{\circ}{R}=\kappa T_{\mu \nu}-\kappa ^2\left(\Xi ^{\alpha \beta}_{\;\;\mu}\Xi _{\alpha \beta \mu}-\frac{1}{2}\Xi^{\alpha \beta \gamma}\Xi_{\alpha \beta \gamma}g_{\mu \nu}\right).
\ee
}}

{Hence, the gravitational theory based on two connections in the metric-affine framework possesses the structure of a statistical manifold, thanks to the symmetries of the associated hypermomentum. {Even though the connection-matter coupling via hypermomentum is not yet well justified in nature, it has some theoretical motivation, such as formulating a gauge theory of gravity \cite{Hehl:1994ue}. For more details on hypermomentum, consult \cite{hehl1976hypermomentum}. }}

{
A key and novel ingredient in achieving this result is the mutual curvature scalar, a newly defined object from the mutual curvature tensor. This mutual curvature scalar, contains the difference scalar $K$, which introduces a non-trivial coupling between the two connections. It is important to note, however, that the mutual curvature tensor has a relatively rich history, at least from a mathematical perspective.  In \cite{iosifidis2023torsioncurvatureanaloguedualconnections} it was shown that the mutual curvature tensor, as traditionally used by mathematicians \cite{puechmorel2020lifting, calin2014geometric}, does not actually meet the criteria for being a true tensor, due to its lack of multilinearity in each of its slots. }{More precisely, mathematicians considered the mutual curvature of two connections $\nabla^{(1)}$ and $\nabla^{(2)}$ to be defined according to \cite{puechmorel2020lifting}
\begin{equation}
    R_{\nabla_1 \nabla_2}(X,Y)Z:=\nabla_X^{(1)} \nabla_{Y}^{(2)} Z - \nabla_{Y}^{(1)} \nabla_{X}^{(2)}Z - \nabla_{[X,Y]}^{(1)}Z.
\end{equation}
However, this expression is not $C^{\infty}(M)$-linear in its third slot ($Z$), so it cannot be a true tensor \cite{iosifidis2023torsioncurvatureanaloguedualconnections}. Iosifidis \cite{iosifidis2023torsioncurvatureanaloguedualconnections} refined this notion, by defining \begin{equation} \label{curvaturemutualcoofree}
\begin{aligned}
\mathcal{R}_{\nabla_1,\nabla_2}(X,Y)Z=\frac{1}{2} \Big(&\nabla_{X}^{(1)} \nabla_{Y}^{(2)} Z - \nabla_Y^{(1)} \nabla_{X}^{(2)}Z \\
&+ \nabla_{X}^{(2)} \nabla_Y^{(1)}Z - \nabla_Y^{(2)} \nabla_X^{(1)}Z\\
&-\nabla^{(1)}_{[X,Y]} Z - \nabla^{(2)}_{[X,Y]}Z\Big).
\end{aligned}
\end{equation}}
{He has also shown that this object is a true tensor, as it is $C^{\infty}(M)$ multilinear in each of its slots. It also treats both of the connections on an equal footing, since it is symmetric under the exchange $\nabla^{(1)} \leftrightarrow \nabla^{(2)}$. In coordinates, the components of this tensor read
\begin{equation}
\begin{aligned}
\mathcal{R}\tensor{}{^\lambda _\rho _\mu _\nu}=&\frac{1}{2} \left(\tensor{{R}}{^\lambda _\rho _\mu _\nu}^{(1)}+\tensor{{R}}{^\lambda _\rho _\mu _\nu}^{(2)} \right)\\
&- \frac{1}{2} \tensor{K}{^\lambda _\sigma _\mu} \tensor{K}{^\sigma_\rho_\nu} + \frac{1}{2} \tensor{K}{^\lambda _\sigma _\nu} \tensor{K}{^\sigma _\rho _\mu},
\end{aligned}
\end{equation}
where $K$ is given by
\begin{equation}
\tensor{K}{^\rho _\alpha _\beta}=\tensor{\Gamma}{^\rho _\alpha _\beta}^{(1)} - \tensor{\Gamma}{^\rho _\alpha _\beta}^{(2)}.
\end{equation}
By contracting $\lambda$ with $\mu$, one immediately obtains
\begin{equation}\label{indices2}
\mathcal{R}_{\rho \nu} = \frac{1}{2} \left( \tensor{R}{_{\rho \nu}}^{(1)} + \tensor{R}{_{\rho \nu}}^{(2)} \right) 
   - \frac{1}{2} \tensor{K}{^\lambda _\sigma _\lambda} \tensor{K}{^\sigma _\rho _\nu} 
   + \frac{1}{2} \tensor{K}{^\lambda _\sigma _\nu} \tensor{K}{^\sigma _\rho _\lambda}.
\end{equation}
Although it might simply seem that
\begin{equation}
    \tensor{\Omega}{^\rho _\alpha _\beta}=\frac{1}{2} \tensor{K}{^\rho _\alpha _\beta},
\end{equation}
this is not necessarily true, only in a special case. Note that the order of indices in expressions \eqref{indices2} and \eqref{indices3} differs. However, for the case in which $K$ is symmetric (as was considered in \cite{Khosravi_multi}), the two curvature tensors do agree. Hence, the approach of Iosifidis presented in \cite{iosifidis2023torsioncurvatureanaloguedualconnections} can be seen as a generalization of the previously proposed decomposition in \cite{Khosravi_multi} to a case where torsion is not necessarily vanishing. Even if this seems trivial from the formulas in local coordinates, the coordinate-free expression \eqref{curvaturemutualcoofree} was essential for mathematicians to take into account that the object they used before \cite{puechmorel2020lifting, calin2014geometric} is not a true tensor, while the one described by \eqref{curvaturemutualcoofree} is. This generalization has led to follow-up works both in physics \cite{Iosifidis2023} and mathematics \cite{axioms12070667}.
}

It is the goal of the present paper to explore the physical and the cosmological implications of a biconnection theory, based on the newly defined mutual curvature scalar, and by using a metric approach {in contrast to the metric-affine approach considered in \cite{Iosifidis2023}.} {\it We will specifically fix the two connections to be the Schr\"{o}dinger connection, and its dual, {without considering a geometry-matter  coupling.}} This choice is physically justified, since the Schr\"{o}dinger connection preserves the lengths of autoparallelly transported vectors, even though it is not metric-compatible. Considering the dual of this connection {is a novel and interesting method} to introduce a semi-symmetric type of torsion into the dual geometry. After introducing the basic geometric concepts used for the description of a statistical manifold with torsion, {we show that for a Schrödinger-type connection, the pair $(M,g,\nabla^{*})$ is a quasi-statistical manifold. We postulate that the field equations of the proposed biconnection theory take the same form as in the Einstein theory, with the only difference being that the Ricci tensor $R_{\mu \nu}$ and Ricci scalar $R$ are replaced by the mutual curvature $\mathcal{R}_{\mu \nu}$ and mutual curvature scalar $\mathcal{R}$, with the two connections considered to be the Schrödinger connection, and its dual.} We also obtain the equation of motion of massive test particles, which is generally non-geodesic, and takes place in the presence of an extra force, which is fully determined by the torsion vector. The Newtonian limit of the equation of motion is also considered.

We also perform a detailed analysis of the cosmological implications of the theory. As a first step in this study we derive the generalized Friedmann equations for a homogeneous, isotropic and flat geometry, which contains extra torsion/nonmetricity dependent terms, which we interpret as corresponding to the effective energy density and pressure of the dark energy. To close the system of cosmological equations we need to impose some extra conditions on the model parameters. We explore two such models: one imposes the condition of matter energy conservation, and the other assumes a linear equation of state relating the effective dark energy pressure and energy density. A detailed comparison of the models with a small set of observational data for the Hubble function, and with the $\Lambda$CDM paradigm is performed. This comparison indicates that the considered cosmological models may represent some viable alternatives to the $\Lambda$CDM paradigm.

The present paper is organized as follows. {After briefly introducing the geometric perspective of the well-known statistical manifolds, we present the less known notion of a quasi-statistical manifold. In particular, we show that for Weyl or Schrödinger connections $\nabla$, the dual connection $\nabla^{*}$ is not torsion-free, but nevertheless, the pairs $\left(M,g,\nabla^{*} \right)$ are statistical manifolds admitting torsion. In Section \ref{sect2} we propose a biconnection gravity model in the metric formalism, using the recently defined mutual curvature tensor of \cite{Iosifidis2023}, by postulating that the field equations take the form 
\begin{equation}
    \mathcal{R}_{(\mu \nu)}-\frac{1}{2} g_{\mu \nu} \mathcal{R}=8\pi T_{\mu \nu}.
\end{equation}
}

{
As we work in the metric formalism, the connections have to be specified as well: we choose the Schrödinger connection and its dual, thanks to their physically reasonable length-preserving properties. We also study some physical applications of the proposed length-preserving biconnection theory, which differs significantly from usual GR. A main difference is the non-conservation of the energy-momentum  tensor, which we attribute to the presence of an extra force. We also obtain the equation of motion of the massive particles, and the Newtonian  limit of the equations of motion.} The cosmological implications of the theory are investigated in Section~\ref{sect3}, where the generalized Friedmann equations are obtained, and two cosmological models are introduced. The predictions of the models are compared with the observations, and with the predictions of the $\Lambda$CDM model, in Section~\ref{sect4}. Finally, we discuss and conclude our results in Section~\ref{sect5}. {For the sake of completeness, we present a rigorous coordinate-free mathematical treatment of the geometry of quasi statistical manifolds Appendix in ~\ref{appendixA}}. The details of the calculation of the Ricci and mutual difference tensors are presented in Appendix~\ref{appendixB}. The derivation of the generalized Friedmann equations is shown in Appendix~\ref{appendixC}.

\section{(Dual) Statistical structures on manifolds equipped with two connections}\label{sect1}

{In this section, we briefly summarize the basic statistical structures, which can be put on a manifold $M$, given an affine connection $\nabla$. We first introduce the notion of a statistical manifold $(M,g,\nabla)$ and its dual connection $\nabla^{*}$, assuming the absence of torsion. Then, we generalize these constructions for the case of connections with torsion, leading to the notion of a \textit{quasi-statistical manifold}. We show that by considering a Schrödinger or Weyl connection, even though the pairs $(M,g,\nabla)$ are not statistical manifolds, it holds true that $(M,g,\nabla^{*})$ are quasi-statistical manifolds, or statistical manifolds admitting torsion, dubbed \textit{Quasi-Weyl} and \textit{Quasi-Schrödinger} manifolds, respectively. Thus, considering the dual connection $\nabla^{*}$ offers an interesting novel way to introduce torsion into these geometries.}

\subsection{Statistical manifolds}

{In the following, we will introduce the notion of a statistical manifold, assuming a torsion-free connection. This notion is based on a particular geometry, which generalizes the Riemannian one.}

{Let us consider an $n$-dimensional (pseudo)-Riemannian manifold $M$ with local coordinates $ \{ x^{\mu} \}$ and a torsion-free affine connection $\nabla$, described by the coefficients 
\begin{equation}
    g \left( \nabla_{\partial_\mu} \partial_\nu, \partial_\rho \right)=\Gamma_{\rho \nu \mu}.
\end{equation}
}

{
If there exists a totally symmetric tensor $C_{\mu \nu \rho}$ (often called the cubic tensor), such that:
\begin{equation}\label{statistical}
    \nabla_{\mu} g_{\nu \rho}=C_{\mu \nu \rho},
\end{equation}
 then the pair $(M,g,\nabla)$ is called a \textit{statistical manifold}. The historical origin of the name is tied to the close relation of this specific geometry with statistics. To illustrate this, let us consider a family of probability distributions $p=p(x,\sigma)$ with the normalization condition $\int p(x,\sigma) dx=1$. Supposing that $p(x,\sigma)$ depends smoothly on the parameters $\sigma=(\sigma_1,\sigma_2,\dots,\sigma_n)$, these probability distributions form a differentiable manifold. Moreover, we can also define a natural metric on them, known as the \textit{Fisher metric}, given by:
\begin{equation}
    g_{ij}(\sigma):=E_{\sigma}[\partial_i l, \partial_j l]=\int p(x,\sigma) \partial_i l(x,\sigma) \partial_j l(x,\sigma) dx,
\end{equation}
where $E_{\sigma}[f]=\int f(x,\sigma) p(x,\sigma) dx$ is the expectation value of a function $f$, and $l(x,\sigma):=\ln p(x,\sigma)$ denotes the log-likelihood function. Given this data, a natural cubic tensor 
\begin{equation}
    C_{ijk}(\sigma):=\int p(x,\sigma) \partial_i l(x,\sigma) \partial_j l(x,\sigma) \partial_k l(x,\sigma) dx
\end{equation}
can be defined, which makes the tuple $(M,g,\nabla)$ a statistical manifold. However, we would like to point out that in the following, we will simply refer to statistical manifolds from a completely geometric perspecive, not necessarily tied to statistics.
}

{
Mathematicians observed that if we have a statistical manifold $(M,g,\nabla)$, there exists another connection $\nabla^{*}$ such that the pair $(M,g,\nabla^{*})$ also forms a statistical manifold (often referred to as the dual statistical manifold). This special connection is known as the \textit{dual connection}. The connection coefficients of $\nabla^{*}$ are given by \cite{WSM1}
\begin{equation}
\partial_\mu g_{\nu \rho} -g_{\beta \rho}\tensor{\Gamma}{^\beta _\nu _\mu} - g_{\beta \nu} \tensor{\Gamma}{^{\beta*} _\rho _\mu}=0.  \label{1}
\end{equation}
}

The above equation implies that the inner product between two vectors $%
A_{\mu }$ and $B_{\mu }$ is preserved under the parallel transport following the
two dual affine connection
\begin{equation}
g_{\mu \nu }A^{\mu }(x)B^{\nu }(x)=g_{\mu \nu }\left( x+dx\right) A^{\mu
}\left( x+dx\right) B^{\nu \ast }\left( x+dx\right) ,
\end{equation}%
where $A^{\mu }\left( x+dx\right) $ and $B^{\nu \ast }\left( x+dx\right) $
are parallelly transported vectors by the dual affine connections $\tensor{\Gamma}{^\beta_\nu _\mu}$ and $\tensor{\Gamma}{^{\beta \ast}_{\rho \nu}}$, respectively, so that
\begin{equation}
\delta A^{\mu }=-\tensor{\Gamma}{^\mu _\nu _\rho}A^{\nu }dx^{\rho },\delta B^{\mu
}=-\tensor{\Gamma}{^{\mu \ast}_{\nu \rho} }B^{\nu }dx^{\rho }.
\end{equation}

{
{In modified gravity, it is a standard result that a general affine connection can be decomposed according to \cite{csillag2024semisymmetric}:
\begin{equation}\label{generalconnection}
\begin{aligned}
    \tensor{{\Gamma}}{^\mu _\nu _\rho}=\overset{\circ}{\Gamma}\tensor{}{^\mu _\nu _\rho} &+ \frac{1}{2} g^{\lambda \mu}(-Q_{\lambda \nu \rho}+ Q_{\rho \lambda \nu} + Q_{\nu \rho \lambda})\\
    &- \frac{1}{2}g^{\lambda \mu}(T_{\rho \nu \lambda}+T_{\nu \rho \lambda}- T_{\lambda \rho \nu}),
    \end{aligned}
\end{equation}
where $\overset{\circ}{\Gamma}$ denotes the connection coefficient functions of the Levi-Civita connection, and the nonmetricity and torsion are given by
\begin{equation}
    Q_{\mu \nu \rho}=-\nabla_{\mu} g_{\nu \rho}, \; \; \tensor{T}{^\mu _\nu _\rho}=2 \tensor{\Gamma}{^\mu _{[\rho \nu]}}.
\end{equation}}

{As we assumed the connection to be torsion-free and total symmetry of $C$ in \eqref{statistical}  (or equivalently of $Q$ by the definition of a statistical manifold), we immediately obtain
\begin{equation}
    \tensor{\Gamma}{^\mu _\nu _\rho}=\overset{\circ}{\Gamma} \tensor{}{^\mu _\nu _\rho}- \frac{1}{2} \tensor{C}{^\mu _\nu _\rho}.
\end{equation}}

{
According to \eqref{dualcoeffs}, since $Q_{\mu \nu \rho}$ is completely symmetric and $T_{\rho \mu \nu}=0$, we readily get
\begin{equation}
    Q^{*}_{\mu \nu \rho}=C_{\mu \nu \rho}, \; \; T^{*}_{\rho \mu \nu}=0,
\end{equation}
which easily yields
\begin{equation}
    \tensor{\Gamma}{^{\mu *} _\nu _\rho}=\overset{\circ}{\Gamma} \tensor{}{^\mu _\nu _\rho} + \frac{1}{2} \tensor{C}{^\mu _\nu _\rho}.
\end{equation}}

{
It is thus clearly seen that the average connection is the Levi-Civita connection
\begin{equation}
   \overline{\Gamma} \tensor{}{^\mu _\nu _\rho}=\frac{1}{2} \left(\tensor{\Gamma}{^\mu _\nu _\rho}+ \tensor{\Gamma}{^{\mu *}_\nu_\rho} \right)=\overset{\circ}{\Gamma}\tensor{}{^\mu _\nu _\rho},
\end{equation}
where $\overline{\Gamma}\tensor{}{^\mu_\nu_\rho}$ denotes connection coefficient functions of the average connection, and $\overset{\circ}{\Gamma} \tensor{}{^\mu _\nu _\rho}$ are the Christoffels symbols of the Levi-Civita connection. Note that the assumption on $\Gamma$ being torsion-free was not sufficient to conclude that $T^{*}=0$, we also used that the nonmetricity was completely symmetric, i.e. \eqref{statistical}.}

{Let us also present a direct coordinate-free argument of this statement.
\begin{proposition}\label{prop1}
Let $\nabla$ and $\nabla^{*}$ be dual connections on a manifold $(M,g)$. Then the average connection
\begin{equation}
    \overline{\nabla}=\frac{1}{2} \left(\nabla + \nabla^{*} \right)
\end{equation}
is $g$-metrical, i.e. $(\overline{\nabla}_X g)(Y,Z)=0$.
\end{proposition}
\begin{proposition}\label{prop2}
    If $(M,g,\nabla)$ is a statistical manifold, i.e. $\nabla$ is torsion-free and the cubic tensor $C(X,Y,Z)=(\nabla_X g)(Y,Z)$ is totally symmetric, then the dual connection $\nabla^{*}$ is torsion-free.
\end{proposition}
The self contained proofs of these two statements are given in Appendix \ref{appendixA}.  Combining them, we immediately obtain the desired result, as shown in the following corollary.
\begin{corollary}
    Let $(M,g,\nabla)$ be a statistical manifold. Then, the average connection
    \begin{equation}
        \overline{\nabla}=\frac{1}{2} \left(\nabla+\nabla^{*} \right)
    \end{equation}
    coincides with the Levi-Civita connection.
\end{corollary}
\begin{proof}
    From Proposition  \ref{prop1} it follows that $\overline{\nabla}$ is metric-compatible with respect to $g$. From Proposition \ref{prop2} it follows that $\nabla^{*}$ is torsion-free. Using
    \begin{equation}
        \overline{T}(X,Y)=\frac{1}{2}\left( T(X,Y)+T^{*}(X,Y) \right),
    \end{equation}
    we immediately get that $\overline{T}=0$, concluding the proof.
\end{proof}
}

\subsection{Quasi-statistical manifolds}

In his seminal paper \cite{kurose2007}, Takashi Kurose introduced the concept of a \textit{statistical manifold admitting torsion}, also known as a \textit{quasi-statistical manifold}. This was motivated by the idea of relating the non-commutativity of quantum mechanics to torsion.

Simply put, a quasi-statistical manifold is a manifold $M$ equipped with an affine connection $\nabla$ with torsion, satisfying the local condition:
\begin{equation}\label{quasistatistical}
    \nabla_\mu g_{\nu \rho}-\nabla_\nu g_{\mu \rho}+ T_{\rho \mu \nu}=0.
\end{equation}

With the help of the nonmetricity tensor, we can rewrite \eqref{quasistatistical} as
\begin{equation}\label{conditionquasi}
    - Q_{\mu \nu \rho} + Q_{\nu \mu \rho} + T_{\rho \mu \nu}=0.
\end{equation}

One is naturally led to ask the question: given the nonmetricity $Q$ and torsion $T$ of $\nabla$, is it possible to find the nonmetricity $Q^{*}$ and torsion $T^{*}$ of $\nabla^{*}$? Equation \eqref{dualcoeffs} provides the following positive answer:
\begin{equation}\label{torsiondual}
    Q^{*}_{\mu \nu \rho}=-Q_{\mu \nu \rho}, \; \; T^{*}_{\rho \mu \nu}= T_{\rho \mu \nu}- Q_{\mu \nu \rho} + Q_{\nu \mu \rho}.
\end{equation}

It is important to note that even if $\nabla$ is torsion-free, its dual $\nabla^{*}$ could have torsion. Hence, considering the dual of a connection could be seen as an information-geometric procedure to generate torsion. Moreover, from equation \eqref{torsiondual} we can observe that if $\nabla$ is torsion-free, then the equation
\begin{equation}
    T^{*}_{\rho \mu \nu} + Q_{\mu \nu \rho} - Q_{\nu \mu \rho}=0
\end{equation}
is satisfied. Relating the nonmetricity $Q$ to the nonmetricity $Q^{*}$ of the dual connection, we obtain
\begin{equation}
    T^{*}_{\rho \mu \nu}- Q^{*}_{\mu \nu \rho}+ Q^{*}_{\nu \mu \rho}=0.
\end{equation}

It immediately follows that the dual connection $\nabla^{*}$ satisfies the condition \eqref{conditionquasi}. Altogether, we conclude that it is a statistical manifold admitting torsion.
{
\begin{remark}
    As Corollary \ref{corollarytorsionalmanifolds} shows, this works the other way as well. The dual connection $\nabla^{*}$ is torsion-free if and only if the pair $(M,g,\nabla)$ is a statistical manifold admitting torsion.
\end{remark}
}
We present two quasi-statistical manifolds, which arise from vectorial nonmetricities. They have the special property that $Q_{\mu \nu \rho}$ is fully determined by a vector.

\textbf{Quasi-Weyl manifolds.} We consider a manifold $M$ equipped with a torsion-free affine connection with Weyl nonmetricity
\begin{equation}
    Q_{\mu \nu \rho}= W_{\mu} g_{\nu \rho}.
\end{equation}

For the dual connection we obtain
\begin{equation}
    Q^{*}_{\mu \nu \rho}=- W_{\mu} g_{\nu \rho},\; \;   T^{*}_{\rho \mu \nu}=- W_{\mu} g_{\nu \rho} + W_{\nu} g_{\mu \rho}.
\end{equation}

We observe that the dual connection has a semi-symmetric type of torsion. Hence, the procedure of dualizing a Weyl connection introduces a semi-symmetric type of torsion into this geometry.

\textbf{Quasi-Schr\"{o}dinger manifolds.} Let us consider a Schr\"{o}dinger connection \cite{Klemm, Ming2024, csillag2024schrodinger}, which is torsion-free and has nonmetricity of the form
\begin{equation}
    Q_{\mu \nu \rho}=\pi_{\mu} g_{\nu \rho}- \frac{1}{2} \left(g_{\mu \nu} \pi_\rho + g_{\mu \rho} \pi_\nu \right).
\end{equation}

In this case, the dual connection is specified by
\begin{equation}
\begin{aligned}
    Q^{*}_{\mu \nu \rho}&= -\pi_{\mu} g_{\nu \rho} + \frac{1}{2} \left( g_{\mu \nu} \pi_\rho + g_{\mu \rho} \pi_{\nu} \right),\\
    T^{*}_{\rho \mu \nu}&=\frac{3}{2} \left(\pi_{\nu} g_{\mu \rho} - \pi_{\mu} g_{\nu \rho} \right).
\end{aligned}
\end{equation}

Notably, not only the parallel transport along $\nabla$ and $\nabla^{*}$ combined preserves lengths in Schrödinger geometry. With the help of this connection, it is widely known that the lengths of vectors are preserved under autoparallel transport \cite{Schrod,Klemm, Ming2024, csillag2024schrodinger}. Since the dual connection $\nabla^{*}$ also has this type of nonmetricity up to a rescaling with $-1$, it follows that it preserves the lengths of autoparallely transported vectors as well.

A summary of our findings related to quasi-statistical manifolds with vectorial nonmetricity is provided in Fig.~\ref{fig1}.
\begin{figure*}[htbp]
\centering
\includegraphics[width=1\linewidth]{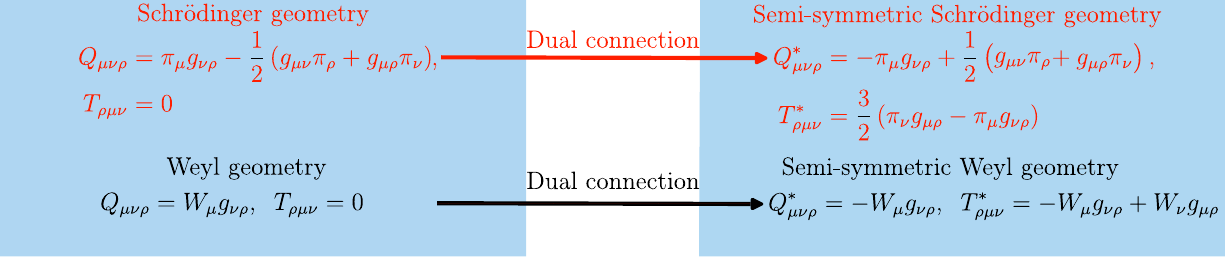}
\caption{Illustration of the dual connections associated to vectorial nonmetricities.}
\label{fig1}
\end{figure*}

\section{Length-preserving biconnection gravity}\label{sect2}

{In this section, we describe the geometric way to couple the Schrödinger connection to its dual, using the recently defined mutual curvature tensor \cite{Iosifidis2023}. Then, the generalized Einstein equations are presented, and some of their properties  are investigated.}

\subsection{Mutual curvature of Schrödinger and quasi-Schrödinger geometries}

In \cite{Iosifidis2023} it has been shown that a gravitational theory based on two connections can exhibit the structure of a statistical manifold, if considered in the metric-affine approach. The building pillar of that theory is the mutual curvature tensor, which is defined as
\begin{eqnarray}\label{mutualcurvaturetensor}
\tensor{\mathcal{R}}{^\lambda_\rho _\mu _\nu}&=&\frac{1}{2} \left( \tensor{R}{^\lambda _\rho _\mu _\nu}^{(1)}+\tensor{R}{^\lambda _\rho _\mu _\nu}^{(2)} \right)\nonumber\\
&&- \frac{1}{2} \tensor{K}{^\lambda _\sigma _\mu}\tensor{K}{^\sigma_\rho _\nu}+\frac{1}{2} \tensor{K}{^\lambda _\sigma _\nu} \tensor{K}{^\sigma _\rho _\mu},
\end{eqnarray}
where $K$ is given by
\begin{equation}
    \tensor{K}{^\lambda _\mu _\nu}=\tensor{N}{^\lambda _\mu _\nu}^{(1)}- \tensor{N}{^\lambda _\mu _\nu} ^{(2)}.
\end{equation}

The mutual Ricci curvature reads
\begin{eqnarray}
\mathcal{R}_{\rho \nu}&=&\frac{1}{2} \left(\tensor{R}{_{\rho \nu}}^{(1)}+\tensor{R}{_{\rho \nu}}^{(2)} \right) - \frac{1}{2} \tensor{K}{^\lambda _\sigma _\lambda} \tensor{K}{^\sigma _\rho _\nu}\nonumber\\
    &&+\frac{1}{2} \tensor{K}{^\lambda _\sigma _\nu} \tensor{K}{^\sigma _\rho _\lambda}.
\end{eqnarray}

By transvecting with $g^{\rho \nu}$ we obtain the mutual Ricci scalar
\begin{equation}
    \mathcal{R}=\frac{1}{2} \left(R^{(1)}+R^{(2)} \right) - \frac{1}{2} \tensor{K}{^\lambda _\sigma _\lambda} \tensor{K}{^\sigma _\rho ^\rho} + \frac{1}{2} \tensor{K}{^\lambda _\sigma ^\rho} \tensor{K}{^\sigma _\rho _\lambda}.
\end{equation}

In contrast to the metric-affine approach taken in \cite{Iosifidis2023}, we fix the two connections and study the cosmological implications of the theory.

{We consider the biconnection theory, where $\nabla^{(1)}$ and $\nabla^{(2)}$ are the Schrödinger connection and its dual connection, respectively. This choice is well-motivated physically, as they both preserve the lengths under autoparallel transport, despite having nonmetricity.}

We introduce the notations
\begin{equation}
\begin{aligned}
    \tensor{R}{_{\mu \nu}}^{(1)}&:=R_{\mu \nu},\; \; R^{(1)}:=R,\\
    \tensor{R}{_\mu _\nu}^{(2)}&:=R_{\mu \nu}^{*},\; \;   R^{(2)}:=R^*,
\end{aligned}
\end{equation}
where the objects denoted with $*$ refer to the dual Schr\"{o}dinger connection. Similarly, we define the mutual difference tensor
\begin{equation}
    \tensor{D}{_\rho _\nu}:=-\frac{1}{2} \tensor{K}{^\lambda_\sigma _\lambda} \tensor{K}{^\sigma _\rho _\nu} + \frac{1}{2} \tensor{K}{^\lambda _\sigma _\nu} \tensor{K}{^\sigma _\rho _\lambda},
\end{equation}
and its contraction, the mutual difference scalar
\begin{equation}
    D:=-\frac{1}{2} \tensor{K}{^\lambda _\sigma _\lambda} \tensor{K}{^\sigma _\rho ^\rho} + \frac{1}{2} \tensor{K}{^\lambda _\sigma _\nu} \tensor{K}{^\sigma ^\nu _\lambda}.
\end{equation}

A lengthy algebraic computation detailed in Appendix~\ref{appendixB} gives:
\begin{enumerate}
    \item[$(i)$] Ricci tensor of the Schr\"{o}dinger connection
    \bea
       R_{\rho \nu}=\overset{\circ}{R}_{\rho \nu} &-& g_{\rho \nu} \overset{\circ}{\nabla}_{\alpha} \pi^{\alpha} + \frac{1}{2} \overset{\circ}{\nabla}_{\rho} \pi_{\nu} - \overset{\circ}{\nabla}_{\nu} \pi_{\rho}\nonumber\\
        &-&\frac{1}{2} \pi_{\sigma} \pi^{\sigma} g_{\rho \nu} - \frac{1}{4} \pi_\rho \pi_\nu.
    \eea
    \item[$(ii)$] Ricci tensor of the dual
    \bea
        R^{*}_{\rho \nu}=\overset{\circ}{R}_{\rho \nu} &-&\frac{1}{2} g_{\rho \nu} \overset{\circ}{\nabla}_{\alpha} \pi^{\alpha} + \overset{\circ}{\nabla}_{\rho} \pi_{\nu} + \overset{\circ}{\nabla}_{\nu} \pi_{\rho}\nonumber\\
        &+& \pi_\sigma \pi^\sigma g_{\rho \nu} + \frac{1}{2} \pi_\rho \pi_\nu.
    \eea
    \item[$(iii)$] Mutual difference tensor
    \begin{equation}
        D_{\rho \nu}=-\frac{1}{4} \pi_\rho \pi_\nu -\frac{1}{2} \pi_\sigma \pi^\sigma g_{\rho \nu}.
    \end{equation}
\end{enumerate}

From the above results, we easily obtain the mutual Ricci tensor
\bea\label{mutualcurvature}
\mathcal{R}_{\rho \nu}
    =\overset{\circ}{R}_{\rho \nu} &-& \frac{3}{4} g_{\rho \nu} \overset{\circ}{\nabla}_{\alpha} \pi^{\alpha} + \frac{3}{4} \overset{\circ}{\nabla}_{\rho} \pi_{\nu}\nonumber\\
    &-& \frac{1}{4} \pi^\sigma \pi_\sigma g_{\rho \nu} - \frac{1}{8} \pi_\rho \pi_\nu,
\eea
and its contraction, the mutual Ricci scalar
\begin{equation}\label{mutualcurvaturescalar}
    \mathcal{R}=\overset{\circ}{R} - \frac{9}{4} \overset{\circ}{\nabla}_{\alpha} \pi^{\alpha} - \frac{9}{8} \pi_\alpha \pi^\alpha.
\end{equation}

\subsection{The gravitational field equations}

By analogy with Einstein gravity, we postulate that the gravitational field equations are
\begin{equation}\label{fieldequations}
    \mathcal{R}_{(\rho \nu)} - \frac{1}{2} g_{\rho \nu} \mathcal{R}= 8\pi T_{\rho \nu}.
\end{equation}

{Let us mention that  the mutual curvature also has an antisymmetric part, namely
\begin{equation}
    \mathcal{R}_{[\rho \nu]}=\frac{1}{2} \left(\mathcal{R}_{\rho \nu} - \mathcal{R}_{\nu \rho} \right)=\frac{3}{8} \left(\overset{\circ}{\nabla}_{\rho} \pi_\nu - \overset{\circ}{\nabla}_{\nu} \pi_{\rho} \right)=\frac{3}{8} F_{\rho \nu},
\end{equation}
which can be though of as a Faraday-type tensor for the vector field $\pi$
\begin{equation}
    F_{\rho \nu}:=\overset{\circ}{\nabla}_{\rho} \pi_{\nu} - \overset{\circ}{\nabla}_{\nu} \pi_{\rho}.
\end{equation}
In our study, we set $\mathcal{R}_{[\rho \nu]} =0$, since an antisymmetric part in the Einstein equations is usually related to peculiar matter. Such matter types appear most often when geometry-matter coupling is considered. However, as we mentioned, we work in the metric formalism, without considering geometry-matter coupling. Moreover, the focus of this paper is on cosmological implications. In a FLRW-type metric, the vector $\pi$ has only a temporal component, thus the condition $\mathcal{R}_{[\rho \nu]}=0$ is automatically satisfied.}}

Using equations \eqref{mutualcurvature} and \eqref{mutualcurvaturescalar}, the field equations can be rewritten as
\begin{eqnarray}\label{feq}
        \overset{\circ}{R}_{\rho \nu} &-& \frac{1}{2} g_{\rho \nu} \overset{\circ}{R} + \frac{3}{8} g_{\rho \nu} \overset{\circ}{\nabla}_{\alpha} \pi^{\alpha}+\frac{3}{8} \overset{\circ}{\nabla}_{\rho} \pi_{\nu} +\frac{3}{8} \overset{\circ}{\nabla}_{\nu} \pi_{\rho}\nonumber\\
        &+&\frac{5}{16} g_{\rho \nu} \pi_{\alpha} \pi^{\alpha} -\frac{1}{8} \pi_{\rho} \pi_{\nu}=8 \pi T_{\rho \nu}.
\end{eqnarray}

By contracting equation ~(\ref{feq}) we obtain
\be\label{43}
-\overset{\circ}{R}+\frac{9}{4}\overset{\circ}{\nabla}_\alpha \pi ^\alpha +\frac{9}{8}\pi _\alpha \pi^\alpha =8\pi T.
\ee

Hence, we can reformulate the gravitational field equations in the form
\bea
\overset{\circ}{R}_{\rho \nu}&=&8\pi \left(T_{\rho \nu}-\frac{1}{2}g_{\rho \nu}T\right)+\frac{1}{4}\left(3\overset{\circ}{\nabla}_\alpha \pi ^\alpha+\pi _\alpha \pi^\alpha\right)g_{\rho \nu}\nonumber\\
&&-\frac{3}{8}\overset{\circ}{\nabla}_\rho \pi_\nu-\frac{3}{8}\overset{\circ}{\nabla}_\nu \pi_\rho+\frac{1}{8}\pi _\rho \pi_\nu.
\eea

{It is interesting to compare the field equations of the biconnection theory, Eqs.~(\ref{feq}), based on the averaging of two connections, with the field equations (\ref{eqWS}) of the Weyl-Schr\"{o}dinger theory. As expected, the field equations of the two theories have a very similar structure, containing a symmetric dependence on the covariant derivatives of the vector fields, on the divergence of the vector fields, and on the square of the vectors. However, it is important to point out the geometric differences between the two approaches. While in the Weyl-Schr\"{o}dinger approach the basic geometric quantity is the nonmetricity, in the biconnection theory the nonmetricity plays a dual role, also determining the torsion tensor.   Another similarity for both theories is that they do not introduce any new physical parameters, or coupling constants. However, differences do appear in the signs of the various similar terms, as well as in the numerical values of the coefficients multiplying the contributions of the nonmetricity. Even that the qualitative behaviour of the two models is similar, the differences in the values of the coefficients leads to quantitative differences in the predictions of the two theories.   
}      

\subsubsection{Divergence of the matter energy-momentum tensor}

By taking the covariant divergence of equation ~(\ref{feq}) with respect to the Riemannian divergence operator $\overset{\circ}{\nabla}$, and by taking into account that the divergence of the Einstein tensor identically vanishes, we obtain for the divergence of the matter energy-momentum tensor the expression
\bea\label{dem}
\hspace{-0.5cm}8\pi \overset{\circ}{\nabla}_\rho \tensor{T}{^\rho _\nu}&=&\frac{3}{8}\overset{\circ}{\Box}\pi _\nu+\frac{3}{8}\left(\overset{\circ}{\nabla}_\nu \overset{\circ}{\nabla}_\rho +\overset{\circ}{\nabla}_\rho \overset{\circ}{\nabla}_\nu\right)\pi ^{\rho}\nonumber\\
\hspace{-0.5cm}&&+\frac{5}{16}\overset{\circ}{\nabla}_\nu \left(\pi _\rho\pi ^\rho\right)
-\frac{1}{8}\overset{\circ}{\nabla}_\rho \left(\pi _\nu \pi^\rho\right)\equiv A_\nu,
\eea
where $\overset{\circ}{\Box}=\overset{\circ}{\nabla}_\rho \overset{\circ}{\nabla} \tensor{}{^\rho}$. Hence, generally, in the present biconnection gravitational theory the matter energy-momentum tensor does not vanish identically.

{The set of the field equations (\ref{feq}), as well as their consequence Eq.~(\ref{dem}) do not form a closed system of equations, due to the lack of an equation of motion for the vector field. An equation of motion for $\pi_\mu$ can be obtained by imposing by hand the requirement of the conservation of the matter energy-momentum tensor,} $\overset{\circ}{\nabla}_\rho \tensor{T}{^\rho _\nu}\equiv 0$. {In this case we obtain for the Weyl vector the evolution equation}
\begin{eqnarray}\label{46}
&&\overset{\circ}{\Box}\pi _\nu+\left(\overset{\circ}{\nabla}_\nu \overset{\circ}{\nabla}_\rho +\overset{\circ}{\nabla}_\rho \overset{\circ}{\nabla}_\nu\right)\pi ^{\rho}\nonumber\\
&&+\frac{10}{3}\overset{\circ}{\nabla}_\nu \left(\pi _\rho\pi ^\rho\right)
-\frac{1}{3}\overset{\circ}{\nabla}_\rho \left(\pi _\nu \pi^\rho\right)=0,
\end{eqnarray} 
{which, once an appropriate set of initial and/or boundary conditions are given, uniquely determines the vector} $\pi _\mu$ {in the vacuum case. However, this situation arises only because the Schr\"{o}dinger connection is described in terms of a single extra vector field.  If the connection would contain more than one fields} $\pi_\mu^{(a)}$, $a=1,2,...n$, {even in the vacuum case the condition of the vanishing of the matter energy-momentum tensor can not uniquely determine an equation of motion for the fields. On the other hand, in the presence of matter, to close the system of the field equations one must impose a supplementary condition (equation of state) for both the baryonic and geometric matter components. Such an equation of state (geometrical or physical) closes the system of field equations, and allows to obtain fully consistent gravitational models.}

{In the present approach we have postulated the gravitational field equations as having the same form as in standard general relativity. A systematic approach for the construction of biconnection gravitational theories can be obtained via a variational principle. If such a principle could be formulated, it would allow to obtain sets of independent equations of motion for the extra vector fields that could be included in the structure of the connections. However, the mutual Ricci scalar (\ref{mutualcurvaturescalar}), even though it represents a generalization of the Riemannian Ricci scalar, cannot be used as a Lagrangian density to obtain the field equations (\ref{feq}). Hence, obtaining an action principle for the field equations of the present biconnection gravity theory is still an open problem.             
}

\paragraph{Equation of motion of massive particles.} The equation of motion for a massive test particle can be found from equation ~(\ref{dem}). We adopt for the matter source a perfect fluid, which is described by two thermodynamic quantities only, the energy density $\rho$, and the thermodynamic pressure $p$.  The energy-momentum tensor of the fluid is then given by
\be\label{perfectfluid}
T_{\mu\nu}=(\rho+p)u_\mu u_\nu+pg_{\mu\nu},
\ee
where $u^{\mu}$ is the four-velocity of the particle, normalized according to $u^\mu u_\mu=-1$. We also introduce the projection operator $h_{\mu \nu }$, defined according to
$h_{\mu \nu }=g_{\mu \nu }+u_{\rho}u_{\nu}$, and which has the property $h_{\mu \lambda}u^\mu=0$. By taking the divergence of equation ~(\ref{perfectfluid}), we obtain
\bea
\hspace{-0.5cm}\overset{\circ}{\nabla}_\mu T^{\mu\nu}&=&h^{\mu\nu}\overset{\circ}{\nabla}_\mu p + u^\nu u_\mu\overset{\circ}{\nabla}\tensor{}{^\mu} \rho\nonumber\\
\hspace{-0.5cm}&&+(\rho+p)\big(u^\nu\overset{\circ}{\nabla}_\mu u^\mu+u^\mu\overset{\circ}{\nabla}_\mu u^\nu\big)=\frac{1}{8\pi}A^\nu.
\eea

We multiply now the above equation with $h_\nu^\lambda$ to find
\be
h_\nu^\lambda\overset{\circ}{\nabla}_\mu T^{\mu\nu}=(\rho+p)u^\mu\overset{\circ}{\nabla}_\mu u^\lambda+h^{\nu\lambda}\overset{\circ}{\nabla}_\nu p=\frac{1}{8\pi}h_\nu^\lambda A^\nu,
\ee
where we have used the identity $u_\mu\overset{\circ}{\nabla}_\nu u^\mu=0$.
Hence the equation of motion for a massive test particle in length-preserving biconnection gravity takes the form
\be\label{eq404}
\frac{d^2x^\lambda}{ds^2}+\overset{\circ}{\Gamma}\tensor{}{^\lambda_{\mu\nu}}u^\mu u^\nu=\frac{1}{(\rho +p)}h^{\nu \lambda} \left(\frac{1}{8\pi}A_\nu-\overset{\circ}{\nabla}_\nu p\right)= f^\lambda,
\ee
where we have used the definition of the covariant derivative to obtain $u^\mu\overset{\circ}{\nabla}_\mu u^\lambda$ in the left hand side of equation ~(\ref{eq404}). 

Moreover, by $\overset{\circ}{\Gamma}\tensor{}{^\lambda_{\mu\nu}}$ we have denoted the Levi-Civita connection associated to the metric. Hence, the motion of the massive particles in the present theory is non-geodesic, and an extra force $f^\lambda$ is generated.   The extra-force is perpendicular to the four-velocity, and it satisfies the condition $f^{\lambda }u_{\lambda }=0$. If the torsion vector vanishes,  the extra-force takes the form $f^{\lambda }=-h^{\lambda \nu}\nabla _{\nu }p/\left(\rho +p\right)$, corresponding to the standard general relativistic fluid motion.

\subsubsection{The Newtonian limit}

We assume that one can formally represent $f^\lambda$ as
\be\label{eq407}
f^{\lambda} = (g^{\nu \lambda}+u^{\nu}u^{\lambda})\overset{\circ}{\nabla}_{\nu} \ln \sqrt{Q}=h^{\nu \lambda}\overset{\circ}{\nabla}_{\nu} \ln \sqrt{Q},
\ee
or
\be
\frac{1}{(\rho +p)} \left(\frac{1}{8\pi}A_\nu-\overset{\circ}{\nabla}_\nu p\right)=\overset{\circ}{\nabla}_{\nu} \ln \sqrt{Q},
\ee
where we have introduced the dimensionless function $Q$ to describe the effects of the extra force. We assume that $Q$ is not an
explicit function of $u^{\mu}$. The equation of motion ~\eqref{eq404} can be obtained from the variational principle
\be\label{eq408}
S_p=\int L_p \; ds=\int\sqrt{Q}\sqrt{g_{\mu\nu}u^{\mu}u^{\nu}} \; ds,
\ee
where $S_p$ and $L_p$ are the action and the  Lagrangian density of the test particle.
In the limit $\sqrt{Q}\rightarrow1$, we reobtain the variational principle for the motion of the  standard general relativistic particles. The equivalence between equation ~(\ref{eq404}) and the variational principle (\ref{eq408}) can be proven by writing down the Lagrange equations
corresponding to the action~(\ref{eq408}),
\begin{equation}
\frac{d}{ds}\left( \frac{\partial L_{p}}{\partial u^{\lambda }}\right)
 - \frac{\partial L_{p}}{\partial x^{\lambda }}=0\, .
\end{equation}

Then, we successively obtain
\begin{equation}
\frac{\partial L_{p}}{\partial u^{\lambda }}=\sqrt{Q}u_{\lambda }
\end{equation}
and
\begin{equation}
\frac{\partial L_{p}}{\partial x^{\lambda }}
=\frac{1}{2} \sqrt{Q}g_{\mu \nu,\lambda }u^{\mu }u^{\nu }
+\frac{ 1}{2} \frac{Q_{,\lambda }}{Q}\, ,
\end{equation}
respectively. Finally,  a simple calculation gives the equations of motion of the
particle as
\begin{equation}
\frac{d^{2}x^{\mu }}{ds^{2}}+\tensor{\Gamma}{^\mu_\nu_\lambda}u^{\nu }u^{\lambda
}+\left( u^{\mu }u^{\nu }+g^{\mu \nu }\right) \overset{\circ}{\nabla} _{\nu }\ln \sqrt{Q}=0.
\end{equation}

When $\sqrt{Q}\rightarrow 1$ the standard general relativistic equation for geodesic motion are reobtained.

The Newtonian limit of the theory can be studied by using the variational principle equation for ~\eqref{eq408}. In the weak gravitational field limit, the interval $ds$ for a dust fluid, with $p=0$, in motion in the gravitational field is given by
\be\label{eq415}
ds\approx \sqrt{1+2\phi-\vec{v}^2} \; dt\approx \left(1+\phi-\frac{\vec{v}^2}{2}\right) \, dt,
\ee
where $\phi$ is the Newtonian potential and $\vec{v}$ is the three-dimensional
velocity of the fluid. We also represent $\sqrt{Q}=1+U$, $\ln \sqrt{Q}=\ln (1+U)\approx U$, thus obtaining
\be
\overset{\circ}{\nabla}_\nu U\approx \frac{1}{8\pi \rho}A_\nu.
\ee

In the first order of approximation the equation of motion of the fluid is obtained from the variational principle
\be\label{eq417}
\delta\int \left[1+U+\phi-\frac{\vec{v}^2}{2}\right]dt=0.
\ee

By writing down the equation of motion corresponding to the variational principle (\ref{eq417}), we obtain the total acceleration $\vec{a}$ as given by
\be\label{eq418}
\vec{a}=-\overset{\circ}{\nabla} \phi-\overset{\circ}{\nabla} U=-\overset{\circ}{\nabla} \phi-\frac{1}{8\pi \rho}\vec{A}=\vec{a}_N+\vec{a}_E,
\ee
where $\vec{a}_N=-\overset{\circ}{\nabla} \phi$ is the Newtonian acceleration, and the
extra acceleration, induced by the presence of the torsion, is
\be\label{accf}
\vec{a}_E=-\overset{\circ}{\nabla} U\approx -\frac{1}{8\pi \rho}\vec{A},
\ee
and 
\be
a_E^2=\vec{a}_E\cdot \vec{a}_E=\frac{1}{64\pi^2 \rho ^2}\vec{A}^2,
\ee
respectively.

The acceleration given by equation ~(\ref{eq418}) is due to the presence of the torsion in the biconnection gravitational model.
Since we have assumed that the fluid is pressureless, there is no hydrodynamical acceleration $\vec{a}_p$ term in the
expression of the total acceleration. Such an acceleration term
does exist in the general case. It is interesting to note that the extra-acceleration $\vec{a}_e$ depends not only on the properties of the torsion vector, but also on the density of the fluid.

\section{Cosmological applications}\label{sect3}

In the present Section we investigate the cosmological implications of the generalized gravitational field equations (\ref{feq}). As a first step in our study we obtain the generalized Friedmann equations, corresponding to a flat Friedmann-Lemaitre-Robertson-Walker metric. The existence of a de Sitter type solution is also investigated. Several cosmological models, corresponding to different choices of an effective equation of state for the geometric energy and pressure components are also studied. A comparison with the observational data is performed as well.

\subsection{The generalized Friedmann equations}

We consider a flat FLRW metric
\begin{equation}
    ds^2=-dt^2+a(t)^2 \delta_{ij} dx^i dx^j.
\end{equation}

For matter, we take a perfect fluid with the energy-momentum tensor given by Eq.~(\ref{perfectfluid}).
Due to the requirement of homogeneity and isotropy of the Universe, the field $\pi$ can have only a temporal component
\begin{equation}
    \pi^{\mu}=(\psi(t),0,0,0) \iff \pi_\mu=(-\psi(t),0,0,0).
\end{equation}

With these assumptions, a calculation detailed in Appendix~\ref{appendixC} yields the following Friedmann equations
\begin{equation}\label{57}
    3H^2 =8 \pi \rho+\frac{9}{8}\dot{\psi}+\frac{9}{8}H\psi -\frac{3}{16}\psi ^2=8\pi \left(\rho+\rho _{eff}\right),
\end{equation}
\begin{equation}\label{58}
   2 \dot H + 3 H^2= -8 \pi p + \frac{3}{8} \dot \psi + \frac{15}{8} H \psi - \frac{5}{16} \psi^2=-8\pi \left(p+p_{eff}\right),
   \end{equation}
   where we have denoted
   \be
   \rho_{eff}=\frac{1}{8\pi}\left(\frac{9}{8}\dot{\psi}+\frac{9}{8}H\psi -\frac{3}{16}\psi ^2\right),
   \ee
   and
   \be
   p_{eff}=\frac{1}{8\pi}\left(- \frac{3}{8} \dot \psi - \frac{15}{8} H \psi +\frac{5}{16} \psi^2\right),
   \ee
respectively. From the generalized Friedmann equations (\ref{57}) and (\ref{58}) we obtain the conservation equation
\be
\frac{d}{dt}\left[a^3\left(\rho +\rho_{eff}\right)\right]+\left(p+p_{eff}\right)\frac{d}{dt}a^3=0,
\ee
or, equivalently,
\be
\dot{\rho}+3H(\rho+p)+\dot{\rho}_{eff}+3H\left(\rho_{eff}+p_{eff}\right)=0.
\ee

To facilitate the comparison of the theoretical predictions with the observational data we will use as an independent variable the redshift $z$, defined as $1+z=1/a$. Then we can replace the time variable by $z$ according to the relation
\be
\frac{d}{dt}=-(1+z)H(z)\frac{d}{dz}.
\ee

\paragraph{Dimensionless and redshift representation.} To simplify the mathematical formalism we introduce a set of dimensionless variables $(\tau, h, r, P, \Psi)$, defined according to
\be
\tau =H_0t, H=H_0h, \rho =\rho _cr, p=\frac{1}{3}\rho _c P, \psi =H_0\Psi,
\ee
where $H_0$ denotes the present day value of the Hubble function and $\rho_c=\frac{3 H_0^2}{8 \pi}$. Hence the Friedmann equations take the dimensionless form
\be
h^2=r+\frac{3}{8}\frac{d\Psi}{d\tau}+\frac{3}{8}h\Psi-\frac{1}{16} \Psi^2,
\ee
\be
2\frac{dh}{d\tau}+3h^2=-P+\frac{3}{8}\frac{d\Psi}{d\tau}+\frac{15}{8}h\Psi-\frac{5}{16}\Psi^2.
\ee

In the redshift space we obtain the evolution equations
\be \label{F1redshift}
h^2(z)=r(z)-\frac{3}{8}(1+z)h(z)\frac{d\Psi}{dz}+\frac{3}{8}h(z)\Psi(z)-\frac{1}{16} \Psi^2(z),
\ee
\bea
-2(1+z)h(z)\frac{dh(z)}{dz}+3 h^2(z)&=&-P(z)\nonumber\\
&&-\frac{3}{8}(1+z)h(z)\frac{d\Psi (z)}{dz}\nonumber\\
&&+\frac{15}{8}h(z)\Psi (z)-\frac{5}{16}\Psi^2 (z).\nonumber\\
\eea

\subsubsection{The de Sitter solution}

We consider now the de Sitter type solution of the generalized Friedmann equations, corresponding to $H=H_0={\rm constant}$. By assuming a dust Universe with $p=0$, the second Friedman equation (\ref{58}) gives for $\psi$ the evolution equation
\be
\dot{\psi}+5H_0\psi-\frac{5}{6}\psi ^2-8H_0^2=0,
\ee
with the general solution given by
\begin{equation}
\psi (t)=3H_{0}\left\{ 1+\frac{1}{\sqrt{15}}\tan \left[ \frac{\sqrt{15}}{6}%
\left( H_{0}t+\alpha \right) \right] \right\} ,
\end{equation}
where we have used the initial condition $\psi (0)=\psi_0$, and we have introduced the notation $\alpha =\left( 6/\sqrt{15}\right) \tan ^{-1}\left[
\sqrt{15}\left( \psi _{0}-3\sqrt{15}H_{0}\right) /3H_{0}\right] /\left(\sqrt{15}H_0\right)$. The evolution of the matter energy density is obtained as
 \be
 8\pi \rho (t)=\frac{3}{20} H_0^2 \left\{8-3 \sec ^2\left[\frac{1}{2} \sqrt{\frac{5}{3}} \left(
   H_0 t+\alpha\right)\right]\right\}.
 \ee

 It is interesting to note that the matter energy density is a periodic function. However, it reaches the zero value after a finite time interval $t_f=(1.41-\alpha)/H_0$, which represents the end of the de Sitter phase for the present biconnection model.

\subsection{Specific cosmological models}

In the present subsection we will consider several examples of specific cosmological models in the framework of length-preserving biconnection gravity. For each case we will also consider a detailed comparison with the observational data, as well as with the predictions of the standard $\Lambda$CDM model.
In the following we will restrict our analysis to the case of a Universe filled with a pressureless dust, with $p=0$.

\subsubsection{Conservative cosmological (CC) model}\label{ss:model1}

As a first example of a cosmological model we consider the case in which the matter energy density is conserved, thus satisfying the equation
\be\label{81}
\dot{\rho}+3H(\rho+p)=0.
\ee

Therefore, the effective geometric energy density is also conserved, and we have
\be
\dot{\rho}_{eff}+3H\left(\rho_{eff}+p_{eff}\right)=0,
\ee
giving for the torsion vector the evolution equation
\be\label{tors}
\ddot{\psi}+\dot{H}\psi+3H\dot{\psi}-\frac{1}{3}\psi \dot{\psi}-2H^2\psi+\frac{1}{3}H\psi^2=0.
\ee

{The conservation equation (\ref{81}) gives for the evolution of the matter density the relation $\rho= \rho_0/a^3$, where $\rho _0$ is an integration constant. By substituting this relation for $\rho$ into the generalized Friedmann equation (\ref{57}), and considering it together with Eq.~(\ref{58}),  one obtains a system of first order differential equations.  However, in the following we will use in the numerical investigations the second order differential equation (\ref{tors}) for the evolution of torsion, since it offers a clear physical and cosmological picture on the time variation of $\psi$, and on its dependence on the Hubble function, and on its derivative.    
} 

 We also introduce a new variable $u=d\Psi/d\tau$. Thus, the system of equations describing the conservative cosmological evolution in the biconnection gravity theory takes the form
 \be\label{m11}
 -(1+z)h(z)\frac{d\Psi (z)}{dz}=u(z),
 \ee
 \bea\label{m12}
&& -(1+z)h(z)\frac{du(z)}{dz}-(1+z)h(z)\frac{dh(z)}{dz}\Psi (z)\nonumber\\
&&+3h(z)u(z)  -\frac{1}{3}\Psi (z)u(z)\nonumber\\
&&-2h^2(z)\Psi (z)+\frac{1}{3}h(z)\Psi ^2(z)=0,
  \eea
\bea\label{m13}
-2(1+z)h(z)\frac{dh(z)}{dz}+3 h^2(z)&=&
\frac{3}{8}u(z)+\frac{15}{8}h(z)\Psi (z)\nonumber\\
&&-\frac{5}{16}\Psi^2 (z).
\eea

The system of equations (\ref{m11})-(\ref{m13}) must be integrated with the initial conditions $h(0)=1$, $\Psi (0)=\Psi_0$, and $u(0)=u_0$, respectively.

{From the first Friedmann equation \eqref{F1redshift}, we  obtain the evolution of the matter density
\begin{equation}\label{eq:req1}
    r(z)=h^2(z)-\frac{3}{8}u(z) -\frac{3}{8}h(z) \Psi(z) + \frac{1}{16} \Psi^2(z).
\end{equation}

Hence, the present day matter density $r(0)$ is determined by the initial values $u_0$ and $\Psi_0$ as
\begin{equation}
    r(0)=1-\frac{3}{8} u_0 - \frac{3}{8} \Psi_0 + \frac{1}{16}\Psi_0^2.
\end{equation}

\subsubsection{Linear equation of state cosmological (LESC) model}

As a second cosmological model in length-preserving biconnection gravity with Schrödinger connections we consider the case in which the dark energy effective pressure and density are related by a linear equation of state, given by
\begin{equation}
    p_{eff}(z)=\omega_0 \rho_{eff}(z).
\end{equation}

In this model, the equations describing the evolution of the Universe are given by
\begin{equation}\label{m21}
\begin{aligned}
(1+z)h(z)\frac{d\Psi (z)}{dz}= \frac{\left[3 \omega_0 +5\right] \Psi (z) \left[6 h(z)-\Psi (z)\right]}{6 \left[3\omega_0
   +1\right]},
\end{aligned}
\end{equation}
\bea\label{m22}
-2(1+z)h(z)\frac{dh(z)}{dz}+3 h^2(z)&=&
-\frac{3}{8}(1+z)h(z)\frac{d\Psi (z)}{dz}\nonumber\\
&&+\frac{15}{8}h(z)\Psi (z)-\frac{5}{16}\Psi^2 (z).\nonumber\\
\eea

The system of equations ~(\ref{m21}) and (\ref{m22}) must be solved with the initial conditions $h(0)=1$, and $\Psi (0)=\Psi _0$.

{After solving the system, we obtain the matter density using the closure relation
\begin{equation}
    r(z)=h^2(z)+\frac{3}{8}(1+z)h(z) \frac{d \Psi(z)}{dz}-\frac{3}{8}h(z) \Psi(z)+\frac{1}{16} \Psi^2(z).
\end{equation}

\section{Cosmological tests of the CC and LESC models}\label{sect4}

{In this Section, we compare the predictions of the proposed cosmological models with those of the standard $\Lambda$CDM model, as well as with a small set of observational data for the Hubble function. To recap, the Hubble function for the $\Lambda$CDM model is expressed as
\begin{equation}
    H(z)=H_0 \sqrt{\Omega_M(1+z)^3+\Omega_\Lambda},
\end{equation}
where $H_0$ is the present day value of the Hubble function, $\Omega_m$ is the current  matter density, and $\Omega_\Lambda$ represents the dark energy density. These two parameters satisfy the constraint
\begin{equation}
    \Omega_M+\Omega_\Lambda=1.
\end{equation}

\subsection{Parameter estimation}

To begin our comparison, we determine the optimal fitting values for the $\Lambda$CDM model and the two other models, respectively. These are found by performing a Likelihood analysis, using observational data of the Hubble function within the redshift range $z \in (0.07,2.36)$ as provided in \cite{Bouali_2023}.

The key ingredient in the statistical analysis is the likelihood function
\begin{equation}
    L = L_0 e^{-\chi^2/2},
\end{equation}
where $L_0$ is a normalization constant, and $\chi^2$ is the chi-squared statistic. This function is also known as the negative logarithmic likelihood function \cite{wackerly2008mathematical}, which is crucial for practical purposes since likelihood values can be very small. The chi-squared statistic is defined as
\begin{equation}
    \chi^2 = \sum_{i} \left( \frac{O_i - T_i}{\sigma_i} \right)^2.
\end{equation}

Here $i$ ranges over the data points, $O_i$ are the values obtained from the observational data, $T_i$ are the values predicted by the theory and $\sigma_i$ are the errors associated with the $i$-th data point.

The best-fit values of the parameters are determined by maximizing the likelihood function, which is equivalent to minimizing the chi-squared statistic. For the $\Lambda$CDM model they are given in Table \ref{tab:confidence_intervals3}, while for conservative cosmological (CC) model and the linear equation of state cosmological model (LESC) they are found in Table \ref{tab:confidence_intervals} and \ref{tab:confidence_intervals2}, respectively.

\begin{table}[h!]
    \centering
    \begin{tabular}{|c|c|c|}
        \hline
        & $H_0^{\Lambda}$ & $\Omega_M$ \\
        \hline
        $1\sigma$ & $70.01^{+0.446}_{-0.43}$ & $0.264^{+0.007}_{-0.006}$ \\
        \hline
        $2\sigma$ & $70.01^{+0.882}_{-0.866}$ & $0.264^{+0.014}_{-0.013}$ \\
        \hline
    \end{tabular}
    \caption{Optimal parameter values and their confidence intervals for the $\Lambda$CDM model.}
    \label{tab:confidence_intervals3}
\end{table}

\begin{table}[h!]
    \centering
    \begin{tabular}{|c|c|c|c|c|}
        \hline
        & $H_0$ & $\psi_0$ & $u_0$ \\
        \hline
        $1\sigma$ & $66.959^{+0.413}_{-0.425}$ & $1.097^{+0.018}_{-0.018}$ & $0.777^{+0.074}_{-0.081}$ \\
        \hline
        $2\sigma$ & $66.959^{+0.830}_{-0.842}$ & $1.097^{+0.036}_{-0.036}$ & $0.777^{+0.138}_{-0.155}$ \\
        \hline
    \end{tabular}
    \caption{Optimal parameter values and their confidence intervals for the conservative cosmological (CC) model.}
    \label{tab:confidence_intervals}
\end{table}

\begin{table}[h!]
    \centering
    \begin{tabular}{|c|c|c|c|c|}
        \hline
        & $H_0$ & $\psi_0$ & $\omega_0$ \\
        \hline
        $1\sigma$ & $66.224^{+0.410}_{-0.419}$ & $1.055^{+0.018}_{-0.018}$ & $-1.173^{+0.059}_{-0.059}$ \\
        \hline
        $2\sigma$ & $66.224^{+0.822}_{-0.831}$ & $1.055^{+0.036}_{-0.036}$ &  $-1.173^{+0.059}_{-0.059}$ \\
        \hline
    \end{tabular}
    \caption{Optimal parameter values and their confidence intervals for the linear equation of state cosmological (LESC) model.}
    \label{tab:confidence_intervals2}
\end{table}

The $1\sigma$ confidence intervals were determined using the distribution of
\begin{equation}
    \Delta \chi^2_i = \chi^2_i - \chi_o^2 \,
\end{equation}
where $\chi_o^2$ represents the negative logarithmic likelihood value at the optimal parameter, and $i$ spans a range of values around this optimal point.
The $\Delta \chi^2_i$ values follow a chi-squared distribution with one degree of freedom ($df = 1$), as we vary one parameter at a time \cite{wackerly2008mathematical}.
The critical value, which encapsulates $68.3\%$ of the area under the curve  (corresponding to one standard deviation), is approximately $\Delta \chi^2_i \simeq 1$.
Parameters were identified where this critical threshold is exceeded, both below and above the optimal value.

Similarly, the $2\sigma$ confidence interval is determined with a critical value that encompasses $95.4\%$ of the area under the curve, corresponding to two standard deviations, which is approximately $\Delta \chi^2_i \simeq 4$.
Additionally, the two-dimensional $\Delta \chi^2_{ij}$ surface can be examined, when two parameters are varied simultaneously. For $df=2$, the critical values are approximately $\Delta \chi^2_{ij}\simeq 2.3$ and $\Delta \chi^2_{ij} \simeq 6.2$, corresponding to $1\sigma$ and $2\sigma$ confidence levels, respectively. This approach allows for visual conclusions regarding the relational behaviour of parameter pairs relative to the optimum (see Fig.~\ref{cornerplot}).

\begin{figure*}[htbp]
\centering
\includegraphics[width=0.490\linewidth]{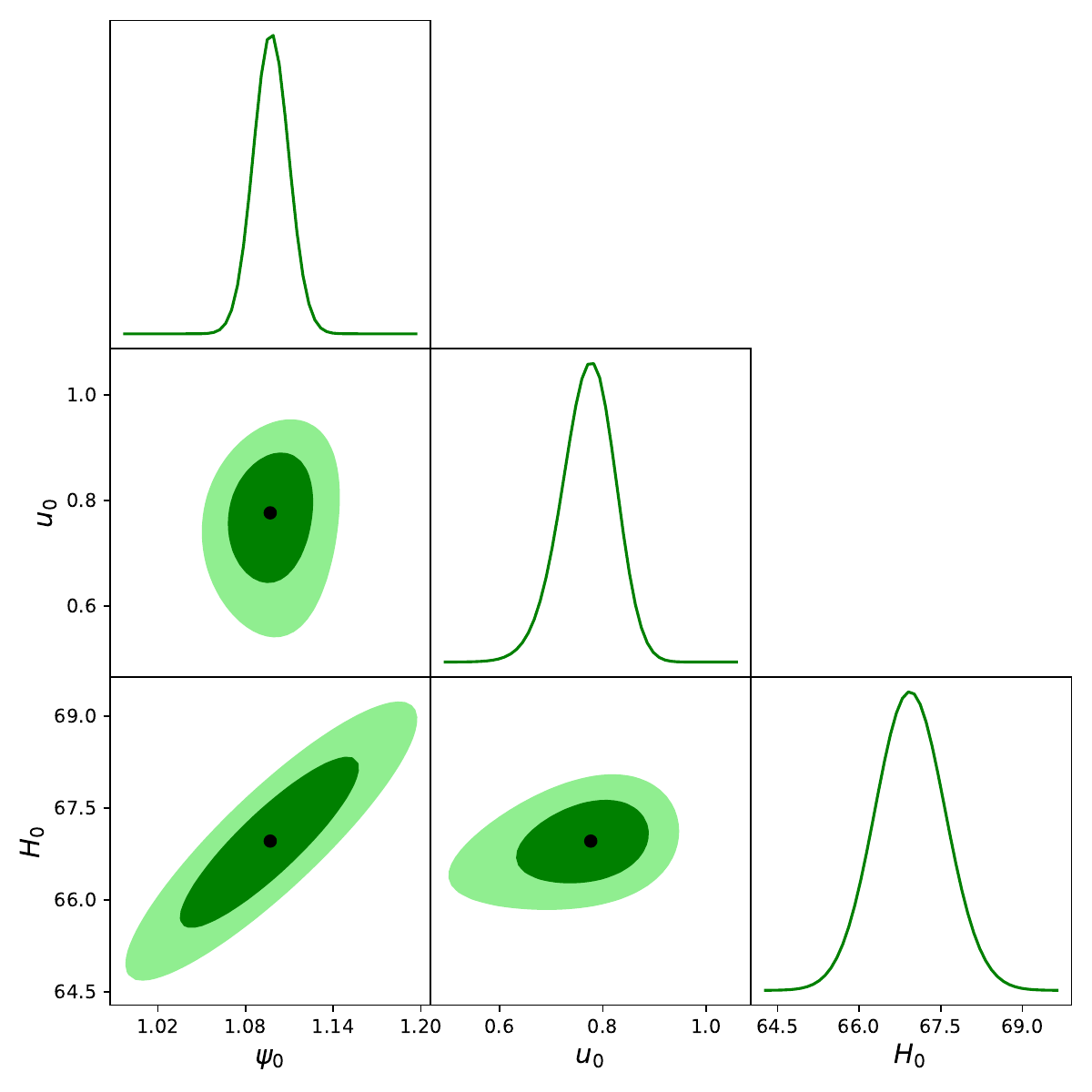} %
\includegraphics[width=0.490\linewidth]{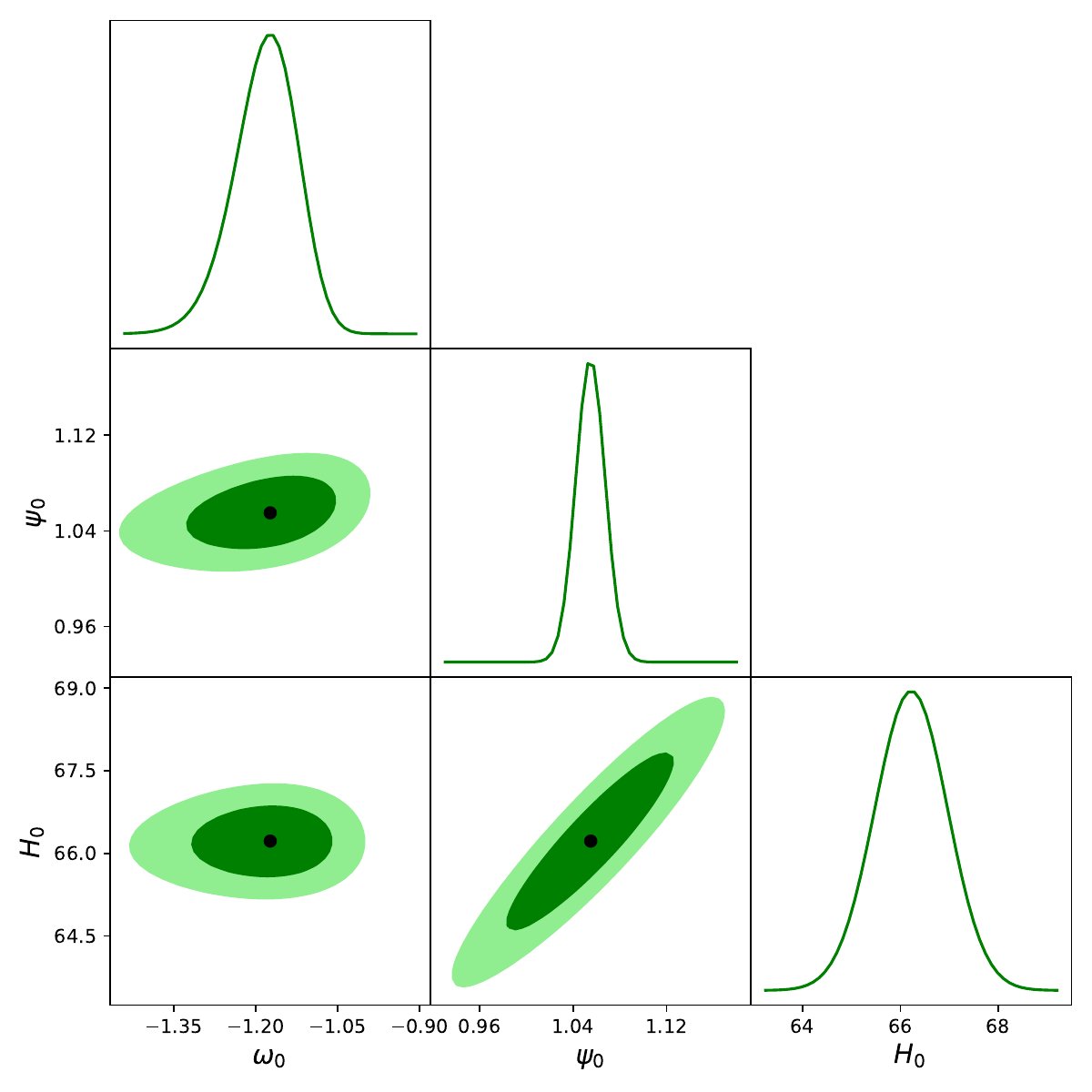}
\caption{Corner plot for the parameters $H_0, \Psi_0,u_0$ of the CC model (left panel) and for the parameters $H_0, \Psi_0, \omega_0$ of the LESC model, showing their $1 \sigma$ and $2 \sigma$ confidence intervals. Dark green regions indicate $1\sigma$ confidence, while light green regions represent $2\sigma$ confidence on the contour plots. Marginal distributions of the parameters are also displayed.}
\label{cornerplot}
\end{figure*}

To identify the best model that describes the observational data, we employ two information criteria: the corrected Akaike Information Criterion \cite{Akaike1974,Vrieze2012,RezaeiMalekjani2021, TanBiswas2012} ($AIC_c$) and the Bayesian Information Criterion ($BIC$) \cite{BurnhamAnderson2004a, BurnhamAnderson2004b}.

{In this study, we use $AIC_c$ instead of $AIC$ due to our small sample size, as $AIC$ tends to favor models with more parameters in such cases. $AIC_c$ addresses this issue by including a correction term for small datasets. It is defined as
\begin{equation}
    AIC_c= \chi^{2}_{min}+2k+\frac{2k^2+2k}{n-k-1},
\end{equation}
where $n$ is the number of data points and $k$ is the number of free parameters in the model. It is easily seen that the correction term vanishes as $n \to \infty$, making $AIC_c$ essentially identical to $AIC$ for large datasets.}

{The model with the smallest $AIC_c$ value is the most supported by the observational data, and is usually chosen to be the reference model. To evaluate how closely other models resemble the reference model, we compute the following quantity
\begin{equation}
    \Delta AIC_{c}=AIC_{c,model}- AIC_{c,reference}.
\end{equation}}

{
Depending on the $\Delta AIC_{c}$ value, models can be categorized as follows: well-supported by observations ($0<\Delta AIC_c<2$), moderately supported by observations ($4< \Delta AIC_c<7)$ or not supported by observations ($\Delta AIC_c >10)$.}

{The Bayesian Information Criterion ($BIC$) is another widely utilized method for model selection, rooted in Bayesian probability theory. Like $AIC_c$, $BIC$ evaluates models based on their fit to the data, but it imposes a heavier penalty for models with a larger number of parameters. Hence, it treats more severely the question of overfitting than $AIC_c$, often favoring simpler models. Formally, the $BIC$ is defined as
\begin{equation}
    BIC=\chi^{2}_{min}+k \ln(n).
\end{equation}}

{
In a similar fashion, the model with the lowest $BIC$ value is considered the best fit for the data and is selected as the reference model. The difference
\begin{equation}
    \Delta BIC=BIC_{model}-BIC_{reference}
\end{equation}
indicates the relative quality of the model compared to the reference. A model is considered weakly disfavored by the data if $0< \Delta BIC <2$, moderately disfavored if $2< \Delta BIC \leq 6$, and strongly disfavored if $\Delta BIC >6$.}

\begin{table*}[ht]
\begin{center}
\small 
\begin{tabular}{|>{\centering\arraybackslash}m{2.5cm}|>{\centering\arraybackslash}m{2cm}|>{\centering\arraybackslash}m{1cm}|>{\centering\arraybackslash}m{2cm}|>{\centering\arraybackslash}m{2cm}|>{\centering\arraybackslash}m{2cm}|>{\centering\arraybackslash}m{2cm}|}
\hline
Model & $\chi^{2}_{min}$ & $k$ & $AIC_c$ & $\Delta AIC_c$ & $BIC$ & $\Delta BIC$ \\
\hline
$\Lambda$CDM & 44.2156 & 3 & 48.4379 & 0 & 52.3018 & 0 \\
$LESC$ & 43.796 & 3 & 48.8324 & 0.3945 & 54.5087 & 2.2069 \\
$CC$ & 41.9618 & 3 & 48.4147 & -0.0262 & 54.0911 & 1.7893 \\
\hline
\end{tabular}
\caption{Summary of $\chi^{2}_{min}$, $AIC_c$, and $BIC$ for the three models.}
\label{table:models}
\end{center}
\end{table*}

{The $\chi^2$, $AIC_c$, and $BIC$ values of the three models are presented in Table \ref{table:models}. Based on the corrected Akaike Information Criterion, the conservative model best describes the data and should be considered the reference model. However, given our small data set and the common practice of using the $\Lambda$CDM model as the reference, we will adopt this approach. Consequently, the $\Delta AIC_c$ for the conservative model is negative. For the $LESC$ model, the $\Delta AIC_c$ is $0.3945$, indicating that this model is well-supported by the data. According to the $BIC$ comparison, the standard $\Lambda$CDM model best fits the data, the $CC$ model is weakly disfavored, and the $LESC$ model is moderately disfavored.
}

\subsection{Cosmological quantities}

Having determined the optimal parameter values for the $\Lambda$CDM model as well as the LESC and CC models, we proceed to compare their cosmological predictions. For a detailed analysis, outside of the Hubble function and the deceleration parameter, we will also compare the jerk and snap parameters of these models.
\paragraph{\textbf{Hubble function.}} The Hubble function $H(z)$ of the three models is compared with the observational data obtained from \cite{Bouali_2023}.

\begin{figure*}[htbp]
\centering
\includegraphics[width=0.490\linewidth]{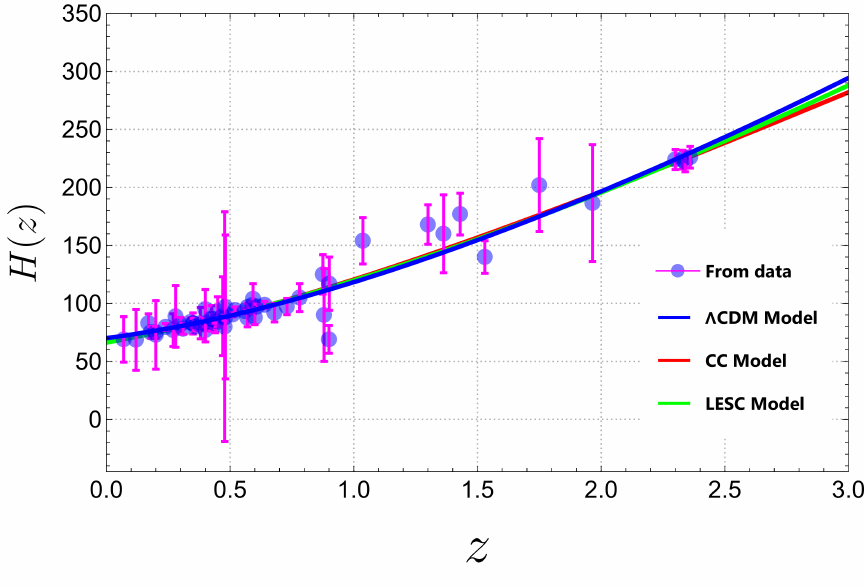} %
\includegraphics[width=0.490\linewidth]{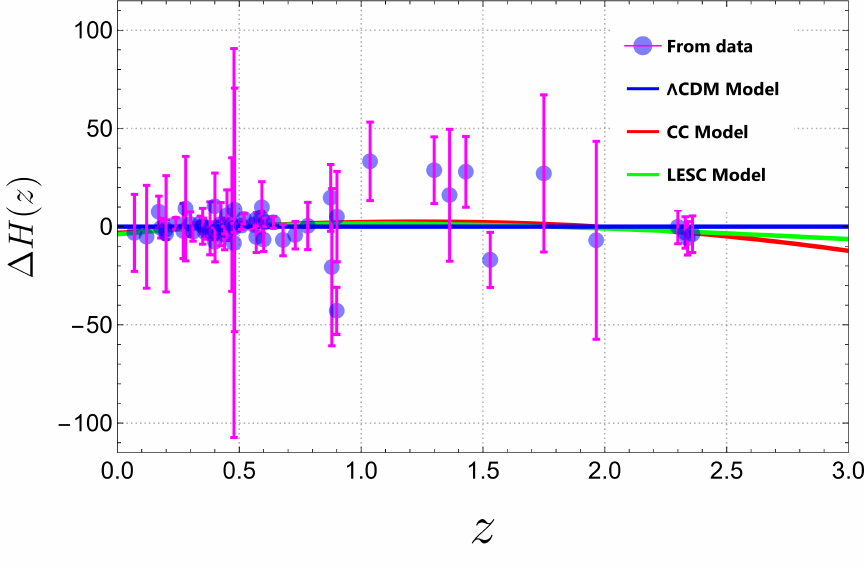}
\caption{Hubble function $H(z)$ for the optimal parameters as a function the redshift $z$ for the $\Lambda$CDM, LESC and CC models (left panel) and the difference plot $H(z)-H_{LCDM}(z)$ (right panel), respectively.}
\label{Hubble}
\end{figure*}

Figure \ref{Hubble}  shows that for the observable range encompassing $z \in (0.07, 2.36)$ the match between the three models and the observational data is perfect, while for larger $z$ values the difference becomes observable.

\paragraph{{\textbf{Deceleration parameter}}} The deceleration parameter $q(z)$ is a dimensionless measure of the rate at which the expansion of the Universe is slowing down ($q>0$), or speeding up ($q<0$). It is defined in terms of the second derivative of the scale factor $a(t)$ with respect to time
\begin{equation}
    q=-\frac{\ddot aa}{a^2}=-\frac{\dot H}{H^2}-1=(1+z)h(z) \frac{ d h(z)}{dz}\frac{1}{h(z)^2} -1.
\end{equation}

The redshift dependence of $q(z)$ for the three distinct models is illustrated in Fig.~\ref{decelerationmatterplot}. Compared to the $\Lambda$CDM model, both the LESC and CC models predict a slightly larger value of the deceleration parameter for small redshift values $0<z<1$. For higher redshifts $z>1$, the LESC and CC models predict a slightly lower, but positive value of $q(z)$.

\begin{figure*}[htbp]
\centering
\includegraphics[width=0.490\linewidth]{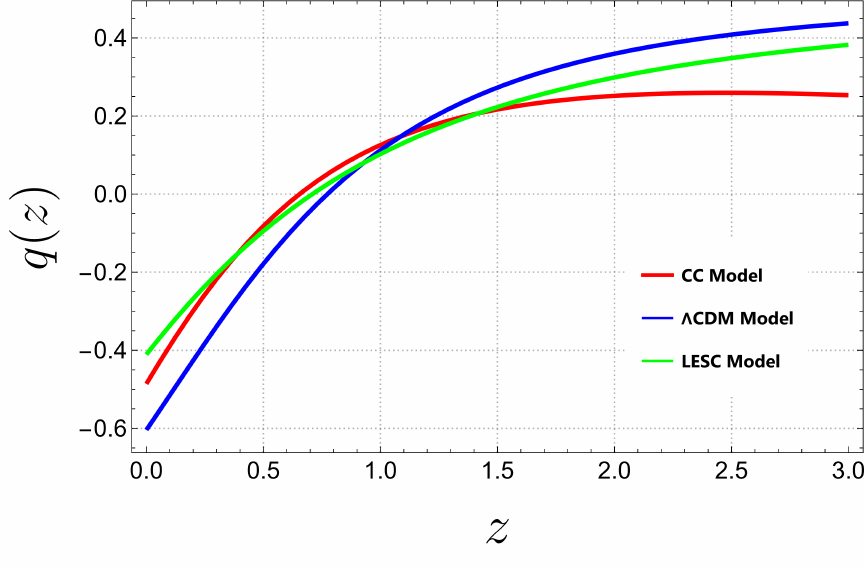} %
\includegraphics[width=0.490\linewidth]{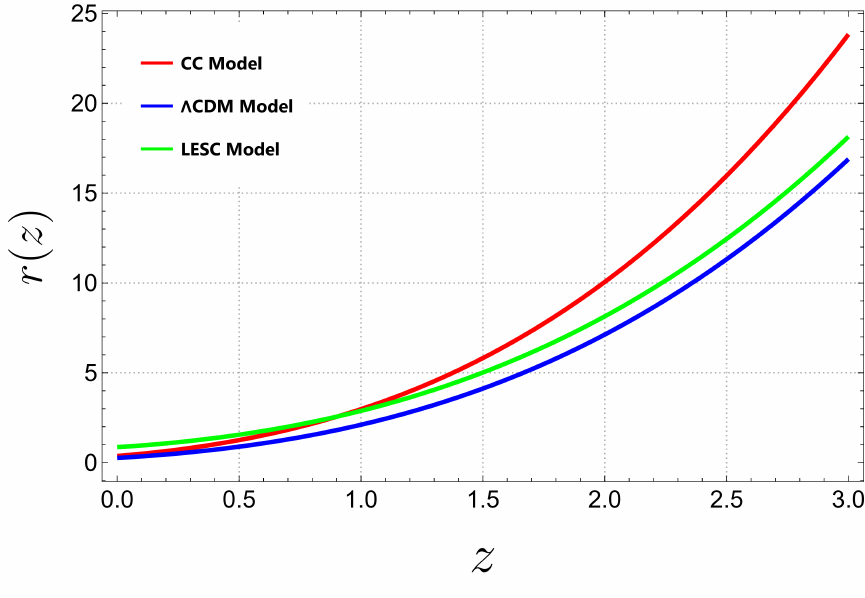}

\caption{The deceleration parameter $q(z)$ as a function of the redshft $z$ (left panel), and the matter density $r(z)$ (right panel) for the $\Lambda$CDM, LESC and CC models with optimal parameters.}
\label{decelerationmatterplot}
\end{figure*}

\paragraph{\textbf{Matter density.}} The matter densities of the $\Lambda$CDM, LESC and CC model are depicted in Fig.~\ref{decelerationmatterplot}. The predictions of the models basically coincide at low redshifts, up to $z \simeq 0.5$, however the LESC and CC cosmological models predict a larger value of matter density at higher redshifts. The LESC model's predictions at high redshifts are closer to the predictions of the $\Lambda$CDM than the ones of the CC model.

\paragraph{\textbf{Jerk.}} The jerk and snap parameters provide insights beyond the deceleration parameter, considering also third and fourth order derivatives of the scale factor \cite{SF1,SF2}. Formally, the jerk parameter is defined as a higher order derivative of the scale factor
\begin{equation}
    j=\frac{1}{a}\frac{d^3a}{d \tau} \left[ \frac{d a}{d \tau}\right]^{-3}=q(2q+1)+(1+z) \frac{dq}{dz}.
\end{equation}

\begin{figure*}[htbp]
\centering
\includegraphics[width=0.490\linewidth]{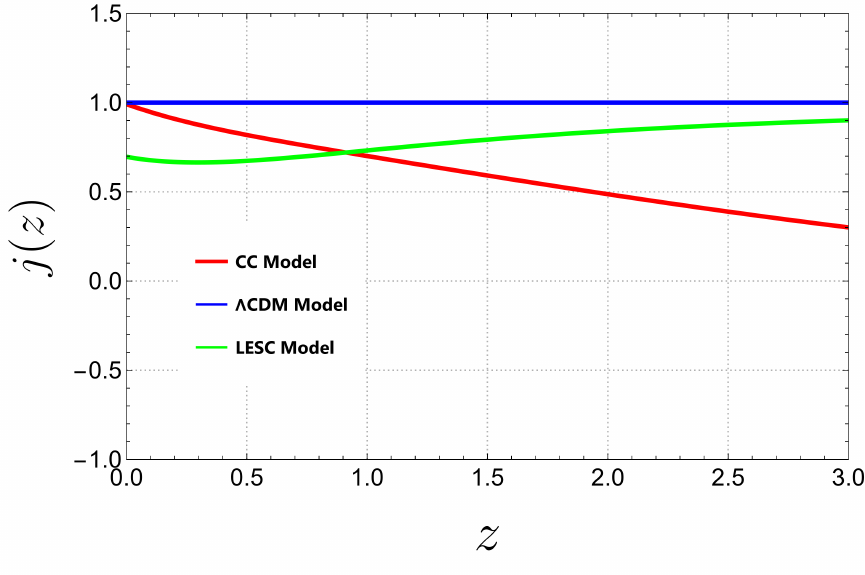} %
\includegraphics[width=0.490\linewidth]{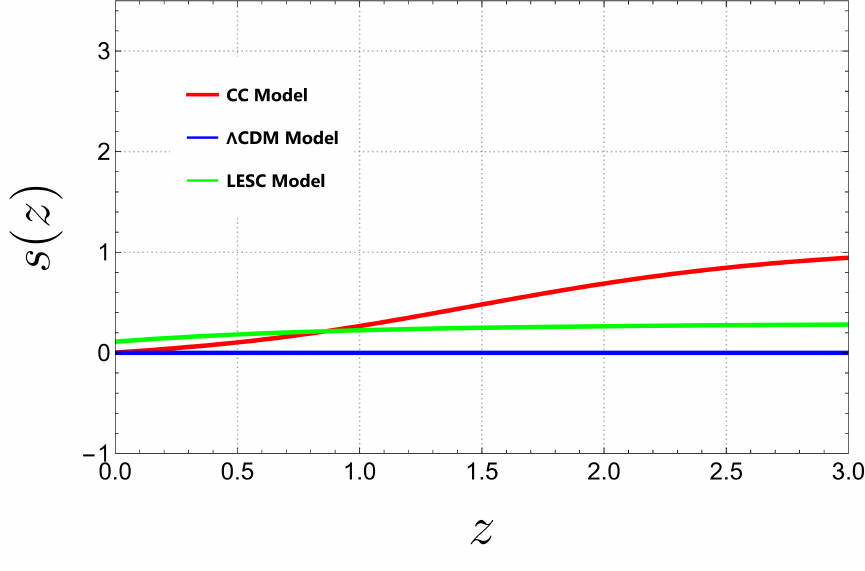}
\caption{The jerk $j(z)$ (left panel) and snap $s(z)$ parameters as functions of the redshift $z$ of the $\Lambda$CDM,LESC and CC models with optimal parameters.}
\label{jerksnapplot}
\end{figure*}

\paragraph{\textbf{Snap}} The snap parameter is a higher-order dimensionless quantity that measures the rate of change of the jerk parameter. Mathematically, it is defined as
\begin{equation}
s = \frac{1}{a} \frac{d^4 a}{d\tau^4} \left[ \frac{1}{a} \frac{da}{d\tau} \right]^{-4} = \frac{j - 1}{3 \left( q - \frac{1}{2} \right)}.
\end{equation}

The jerk and snap parameters of the $\Lambda$CDM, LESC and CC models can be seen on Fig.~\ref{jerksnapplot}. There are significant differences between the cosmological behaviours of these parameters, indicating the possibility of the detailed testing of these cosmological models once high quality observational cosmological data will be available.

\paragraph{{\textbf{$\mathbf{Om(z)}$ diagnostic}}} The $Om(z)$ diagnostic function introduced by Sahni et. al. \cite{sahni2008two} is an important tool in differentiating between alternative cosmological models, and in the comparison of the models with $\Lambda$CDM. The $Om(z)$ function is given by
\begin{equation}
Om(z) = \frac{H^2(z)/H_0^2 - 1}{(1+z)^3 - 1} = \frac{h^2(z) - 1}{(1+z)^3 - 1}.
\end{equation}

For the $\Lambda$CDM model, the $Om(z)$ function is a constant. A positive slope indicates a phantom-like evolution of the dark energy, while a negative slope indicates  quintessence-like dynamics.

\begin{figure*}[htbp]
\centering
\includegraphics[width=0.5\linewidth]{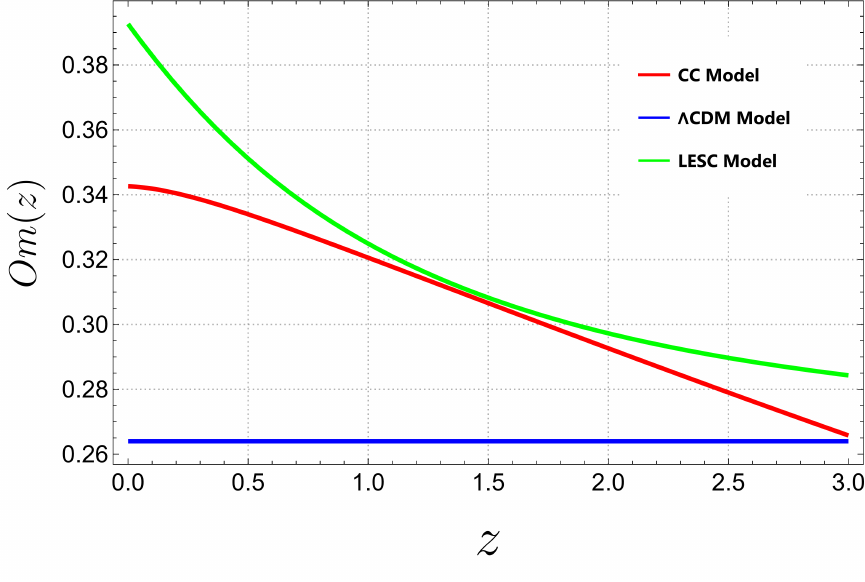}
\caption{The $Om(z)$ diagnostic as a function of the redshift $z$ for the optimal parameter values of the $\Lambda$CDM, LESC and CC models, respectively.}
\label{OmzPlot}
\end{figure*}

The $Om(z)$ diagnostic functions of the LESC and CC models are shown in Fig.~\ref{OmzPlot}. For both of these models, $Om(z)$ has a negative slope throughout the cosmological evolution, which indicates a quintessence-like behaviour.

\paragraph{\textbf{Torsion and related vectors}} The torsion vector $\Psi$ and its derivative $u$ are represented in Fig.~\ref{PsiPlotuPlot}. For the conservative model, $\Psi$ is decreasing monotonically  up to the redshift  $z\approx 0.5$, and becomes an increasing function afterwards. For the linear equation of state model, the torsion  is monotonically decreasing. In either case, up to the redshift $z \simeq 3$, they take positive values. For the CC model, $u$ is also monotonically decreasing, but takes negative values during the cosmological evolution.

\paragraph{\textbf{Statefinder pairs.}} The Statefinder pairs $(j,s)$ and $(j,q)$ are represented in Fig.~\ref{StatefinderPlots}.  The evolution of both pairs
show a significant difference between the two biconnection cosmological models, and $\Lambda$CDM. While as a function of $s$, $j$ is a decreasing function for the CC model, for the LESC model $j$ increases, after a very short decreasing phase. A similar behaviour can be observed for the dependence of $j$ on $q$. Both pairs have at the present time numerical values relatively closed to the $\Lambda$CDM point. The $(j,s)$ plot indicates that both the CC and the LESC models are in the quintessence region.

\begin{figure*}[htbp]
\centering
\includegraphics[width=0.490\linewidth]{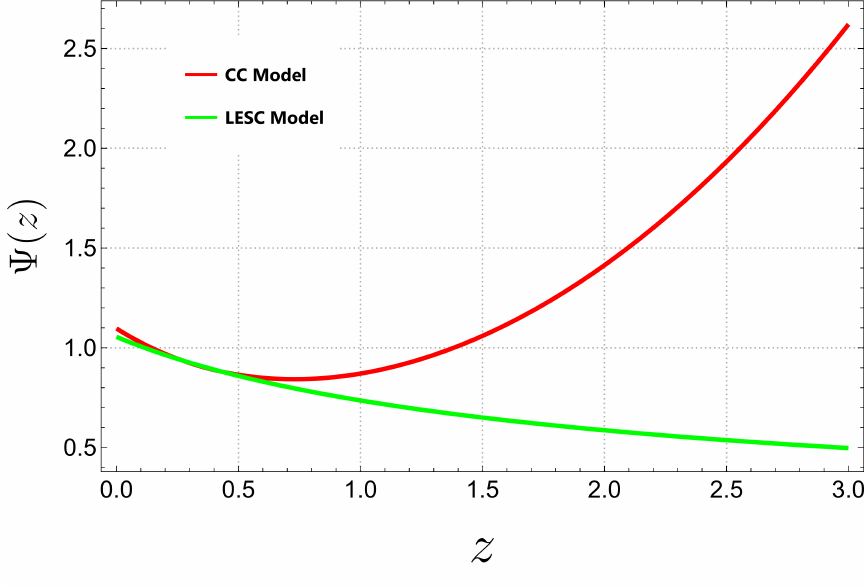} %
\includegraphics[width=0.490\linewidth]{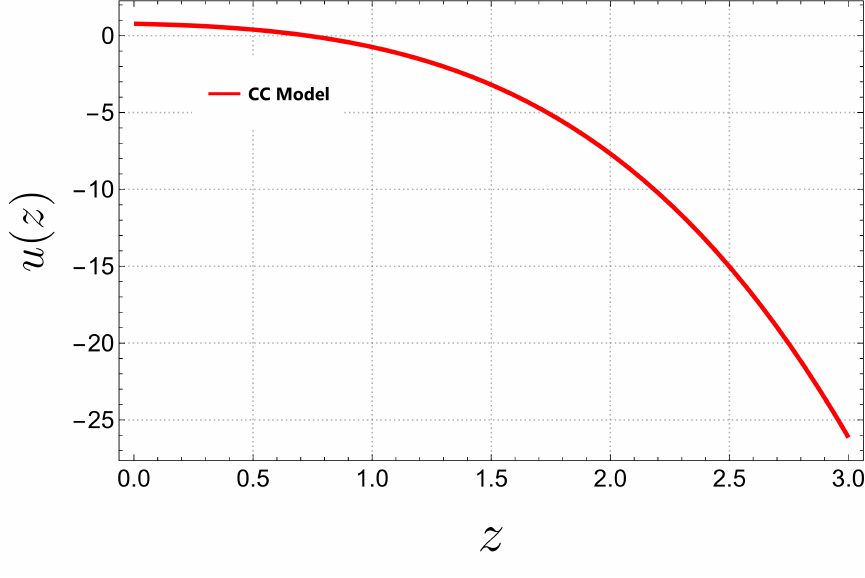}
\caption{The evolution of the torsion vector $\Psi(z)$ (left panel) and of $u(z)$ (right panel) as functions of the redshift variable $z$ for the LESC and CC models with optimal parameters.}
\label{PsiPlotuPlot}
\end{figure*}

\begin{figure*}[htbp]
\centering
\includegraphics[width=0.490\linewidth]{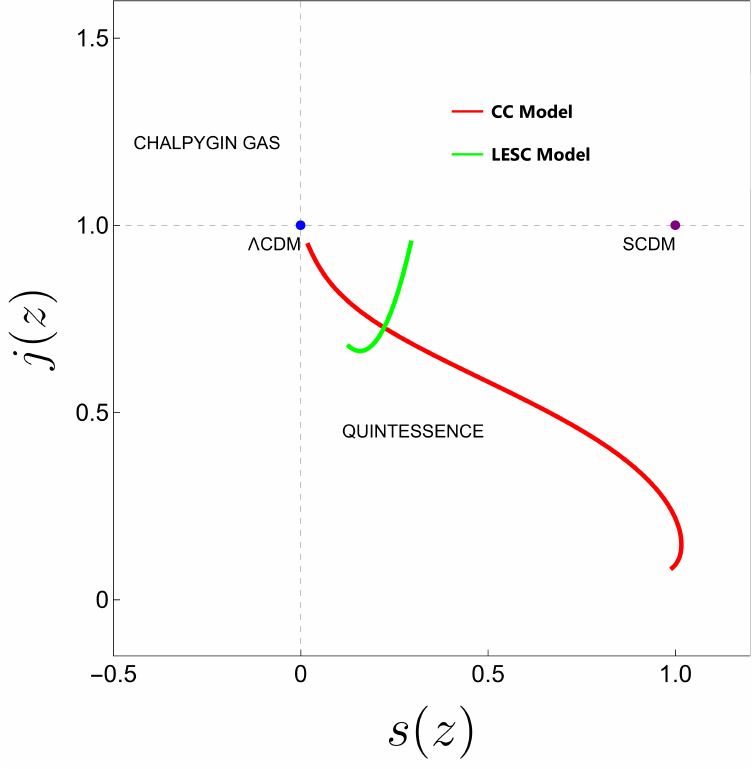} %
\includegraphics[width=0.490\linewidth]{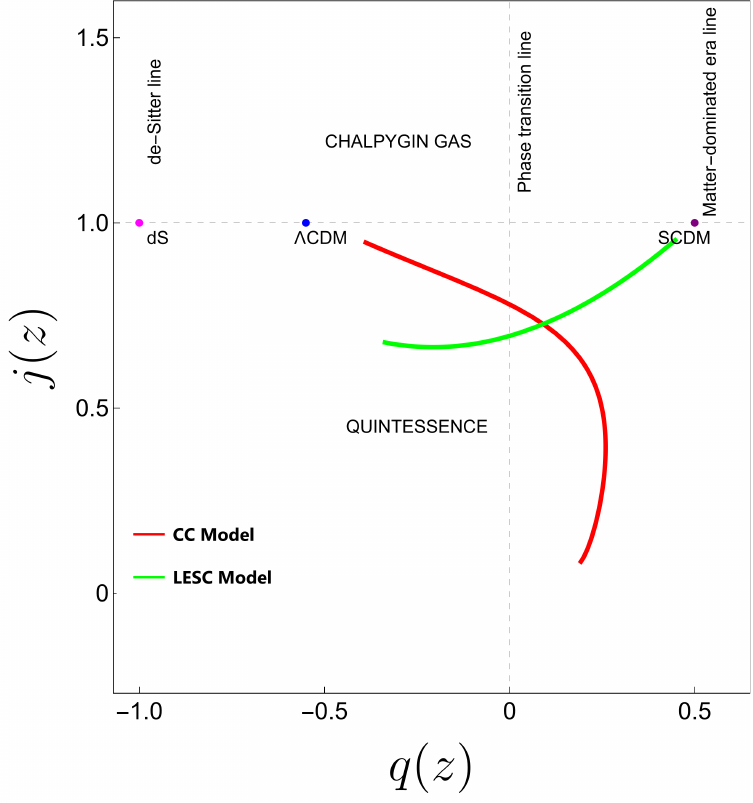}
\caption{The jerk parameter $j(z)$ as a function of the snap  $s(z)$(left panel) and the jerk parameter $j(z)$ as a function of the deceleration parameter $q(z)$ for the LESC and CC models with optimal parameters.}
\label{StatefinderPlots}
\end{figure*}

\paragraph{\textbf{Age of the Universe}.}
The age of the Universe is an important prediction of a cosmological model. It can be directly computed from the Hubble function as
\begin{equation}\label{ageoftheuniverse}
    t_U=\frac{1}{H_0} \lim_{z \to \infty} \int_{0}^{z} \limits \frac{dz'}{(1+z')H(z')}.
\end{equation}

For certain models where $H(z)$ takes an analytic form, $t_U$ can be obtained by a direct computation, without recurring to numerical methods. However, in our case, since $H(z)$ is obtained as the solution of a differential equation, we must calculate the age of the universe numerically. We obtain the following results:
\begin{enumerate}
    \item LESC model: $t_U \simeq 1.41 \times 10^{10}$ years,
    \item CC model: $t_U \simeq 1.45 \times 10^{10}$ years,
    \item $\Lambda$CDM model: $t_U \simeq 1.39 \times 10^{10}$ years.
\end{enumerate}

\section{Discussions and final remarks}\label{sect5}
{
In this paper, we investigated the gravitational and cosmological implications of a biconnection model inspired by the recently introduced mutual curvature tensor.   More specifically, we have considered as a starting point the notion of a dual  connection $\nabla^{*}$ associated with a connection $\nabla$ on a specific geometric structure (manifold). These dually coupled connections are well-known and have been studied in the context of statistical manifolds. On a statistical manifold a cubic tensor $C(X,Y,Z)=(\nabla_X g)(Y,Z)$ is defined, which is completely symmetric. Using this tensor, one can construct two connections $\Gamma^{(1,2)}=\overset{\circ}{\Gamma} \pm C$, whose average is the Levi-Civita connection $\overset{\circ}{\Gamma}$. Thus, the field $C$ describes the deviations from standard Riemannian geometry. The connections $\nabla$ and $\nabla^{*}$ described by $\Gamma^{(1,2)}$ are said to be dual to each other.
}

{
In the case of statistical manifolds one usually imposes the vanishing of the torsion tensors $T,T^*$ of the connection and its dual, respectively. In the present study, we have relaxed this condition by considering statistical manifolds, or information geometries, with torsion and nonmetricity. These type of structures are called quasi-statistical manifolds \cite{kurose2007}, and they significantly enlarge the field of information geometry, also opening some new perspectives for physical applications. We have also introduced a specific quasi-statistical manifold structure, which is based on the Schrödinger connection \cite{Schrod, Klemm, Ming2024,csillag2024schrodinger}, a (still) little known extension of Weyl geometry, initially proposed by Schr\"{o}dinger \cite{Schrod},  which conserves, in the presence of nonmetricity, the length of the vectors under autoparallel transport. The Schrödinger connection, as well as its dual, are fully determined by a vector field. In this framework, once the connection coefficients of the Schrödinger connection and its dual were determined, we computed the mutual curvature tensor, and the mutual curvature scalar of these two connections, obtaining all the necessary tools to build a novel gravitational field theory.
}

{In our approach we have postulated the field equations as having the same form as the standard Einstein field equations, but with the Ricci tensor and scalar being replaced by the mutual curvature tensor and mutual curvature scalar, respectively. This leads to a set of field equations that generalize the Einstein's theory by adding some new, torsion dependent terms into the traditional gravitational field equations as formulated in Riemannian geometry.
}

We have investigated in detail the physical properties and implications of the new field equations. First of all, one should point out that in the present theory the matter energy-momentum tensor is not conserved, since the divergence of $T_{\mu \nu}$ is non-zero. This is a situation specific mostly to gravitational theories with geometry-matter coupling \cite{fRLm,fRT}, but it also appears in the present context. {Energy-momentum preserving models can be constructed in the proposed length-preserving biconnection gravity as well, if this condition is added to the basic theory.} The non-conservation of the matter energy-momentum tensor leads to the non-geodesic motion of massive test particles, and to the presence of an extra force, which depends on the torsion vector. We have explicitly obtained the equations of motion of the massive particles, and we have considered the Newtonian limit of the fluid motion equations for dust. In this case the extra-force depends not only on the torsion vector, but also on the fluid density.

The cosmological implications of the theory have been investigated in detail. As a first step in this analysis the generalized Friedmann equations {of the length-preserving biconnection theory} have been obtained, by assuming a FLRW type geometry, and a particular form of the torsion vector, which preserves the homogeneity and isotropy of the spacetime. The extra terms in the generalized Friedmann equations can be considered as describing the energy density and pressure of an effective, geometric type,  dark energy. In order to close the system of cosmological field equations, one must introduce a supplementary condition/relation for the torsion vector. We have considered two specific cosmological models, in which, for the first model,  we have required the conservation of the matter energy-momentum tensor, while in the second model we have assumed the existence of a linear equation of state relating the pressure and energy density of the dark energy. In both cases we have confronted the theoretical predictions of the models with a set of 57 observational data points for the Hubble function, as well as with the results obtained from fitting the $\Lambda$CDM model with the same data. Both models can be considered as giving an acceptable description of the observational data in the redshift range $z\in (0,2.5)$, with the differences increasing for larger redshifts. Differences do also appear in the behaviours of the deceleration parameter, and of the matter density. The {length preserving biconnection  models}  predict a present day value of the Hubble function in the range $66.2-66.95$ km/s/Mpc, values which are quite closed to the value $H(0)=67.4\pm 0.5$ km/s/Mpc, obtained from the Planck data \cite{Pl1,Pl2}. A recent cosmological model independent determination of $H(0)$, by using Fast Radio Bursts obtained the value $67.3\pm 6.6$ km/s/Mpc.   A similar statistical analysis of the same Hubble data performed in the framework of the $\Lambda$CDM model gives for $H(0)$ the significantly higher value of $70.1$ km/s/Mpc. Hence, the results obtained in the framework of the cosmological models considered in the present work may point towards a possible solution of the Hubble tension, and the need to replace the standard $\Lambda$CDM model.

The age of the Universe, as predicted by {length-preserving biconnection  models} is roughly the same as the age predicted by the standard cosmology, with one model (with the linear equation of state for dark energy) predicting a slightly lower age, while the second  (the conservative matter model) predicting a higher age. The age difference of a few hundred millions of years may also represent a solution to the very important problem of the early formation of the supermassive black holes. In a Universe older by five hundred million years, there will be enough time (more than one billion years), to allow the formation of the early supermassive objects detected by JWST \cite{Hain}. Moreover, the number densities of UV-bright galaxies at $z \geq 10$, inferred from the JWST observations are in tension with the predictions of
the majority of the theoretical models previously developed \cite{Fink1, Fink2}. 

{The generalized Friedmann equations of the Weyl-Schr\"{o}dinger theory \cite{Ming2024} are given by
\be\label{eq_FLRW1}
3H^2+2H\omega-\frac{1}{2}\omega ^2+\dot{\omega}=8\pi\rho
\ee
and
\be\label{eq_FLRW2}
-2\dot{H}-3H^2-\frac{7}{3}H\omega-\frac{1}{6}\omega ^2-\frac{2}{3}\dot{\omega}=8\pi p,
\ee 
respectively, which are again qualitatively similar to the generalized Friedmann equations of the present biconnection gravity theory, given by Eqs.~(\ref{57}) and (\ref{58}), respectively. The differences in the numerical coefficients leads to differences in the present day values of the nonmetricity, and of its relation with the matter density and the present day value of the Hubble function. If the presence of the nonmetricity, and of its time derivative,  could be determined independently from some physical measurements, and since the present day values of the matter density and of the Hubble function are known, the closure equations of the two models, involving different sets of constants, would allow to discriminate between the Weyl-Schr\"{o}dinger and biconnection theories.    
}

The non-geodesic motion of the massive test particles in the present  model may have some implications on the understanding of the dynamics of the massive particles gravitating around the galactic centers. In standard general relativity the gravitational effects due to the presence of an arbitrary matter distribution are described by the term  $a_{N}^{\alpha }=\Gamma _{\mu \nu}^{\alpha }u^{\mu }u^{\nu }$ of the geodesic equation of motion. However, in the Newtonian limit, and in three dimensions, the equation of motion of the massive particles in the {length-preserving biconnection model} is given by  equation (\ref{eq418}), $\vec{a}=\vec{a}_{N}+\vec{a}_E$, where $\vec{a}$ is the total acceleration of the particle, $\vec{a}_{N}$ is the Newtonian gravitational acceleration, and $\vec{a}_E$ is the acceleration due to the presence of the torsion and nonconservation  effects. {Since by definition the extra acceleration is radially oriented, as the gradient of a scalar in spherical symmetry, the extra-force is parallel with the Newtonian gravitational acceleration.} For $\vec{a}_E=0$, the equation of motion reduces  to the Newtonian one, with $\vec{a}=\vec{a}_{N}=-GM\vec{r}/r^{3}$. From the generalized equation of motion in the presence of the extra force we obtain $\vec{a}_E\cdot \vec{a}_{N}=\frac{1}{2}\left(a^{2}-a_{N}^{2}-a_E^{2}\right)$, where the dot denotes the three-dimensional scalar product. Hence, we have obtained the unknown vector $\vec{a}_{N}$ as a function of the total acceleration $\vec{a}$, of the extra acceleration $\vec{a}_E$, and of the magnitudes $a^{2}$, $a_{N}^{2}$ and $a_E^{2}$. One can now obtain the vector $\vec{a }_{N}$ as
\begin{equation}
\vec{a}_{N}=\frac{1}{2}\left( a^{2}-a_{N}^{2}-a_E^{2}\right) \frac{\vec{a}}{aa_E}.
\end{equation}

By assuming that the Newtonian gravitational acceleration is small, $a_{N}\ll a$, we obtain the relation
\begin{equation}
\vec{a}_{N}\approx \frac{1}{2}a\left( 1-\frac{a_E^{2}}{a^{2}}\right) \frac{1}{a_E}\vec{a}.
\end{equation}
By denoting 
\begin{equation}
\frac{1}{a_{M}} \equiv \frac{1}{2a_E}\left( 1-\frac{a_E^{2}}{a^{2}}\right),
\end{equation}
we obtain
\begin{equation}\label{117}
\vec{a}_{N}\approx \frac{a}{a_{M}}\vec{a}.
\end{equation}

Eq. ~(\ref{117}) is similar to the acceleration equation introduced from phenomenological considerations in the  MOND theory \cite{Milgrom}.{ By assuming that $a_E^2/a^2<<1$, we obtain $a_M\approx 2a_E$, and, with the use of Eq.~(\ref{accf}), we find for the Newtonian acceleration the expression
\be
\vec{a}_{N}\approx \frac{a}{2a_E}\vec{a}=\frac{4\pi \rho}{\left|A\right|}a\vec{a}.
\ee
Hence we can write the Newtonian acceleration law in its MOND form as
\be
F_N=ma_N=m\mu \left(\frac{a}{a_0}\right)a,
\ee
where $a_0=\left|A\right|/4\pi \rho$. }

{We can thus obtain $a\approx \sqrt{a_{M}a_{N}}=\sqrt{2a_E}a_N$, and then, by using the Newtonian expression of the gravitational acceleration $
a_{N}=GM/r^{2}$, we find $a\approx \sqrt{2a_{E}GM}/r=v_{tg}^{2}/r$, where by $ v_{tg}$ we have denoted the tangential velocity of the particle. Hence, 
$v_{tg}^{2}\rightarrow v_{\infty }^{2}=\sqrt{2a_{E}GM}$, and from this relation we obtain the Tully-Fisher law $L \sim v_{\infty }^{4}$ in the form
$v_{\infty }^{4}=2a_{E}GM$, where $L$ is the galactic luminosity, which is proportional to the galactic mass \cite{Milgrom}. At large distances from the galactic center $v_{\infty}\approx {\rm constant}$, and hence we can obtain for the torsion vector, in the Newtonian approximation,  the expression
\be\label{117}
\left|\vec{A}\right|\approx \frac{4\pi \rho v_\infty^4}{GM}.
\ee 
}
{A basic difference between MOND and the present approach is that while in MOND $a_0$ is assumed to be a universal constant $a_0=1.2\times 10^{-10}\;{\rm m/s^2}$, in the present approach $a_0$ is a function of distance, depending on the background density, and on the torsion vector. Thus, $a_0$ varies from galaxy to galaxy, thus pointing towards a solution to the problems MOND has in explaining the large variety of the galactic rotation curves. }

Hence, the study of the galactic rotation curves opens another possibility, via the presence of the extra acceleration, to test the presence of the torsion in the {length-preserving biconnection theory}. In the present approach  $a_{M}$ is not a universal constant, as it is in the standard MOND theory, since it depends on the torsion vector.

{  We  would also like to mention that our analysis based on the Newtonian approximation of the equation of motion is valid for a large class of modified gravity theories, and it can be extended to any physical model that introduces an extra acceleration in the equation of motion. But once a physical model that predicts the form of the extra force can be explicitly obtained, Eq.~(\ref{117}) can be used to test, and discriminate between various modified gravity theories. 
}   

In the present paper we have introduced a geometric theory of gravity, which is based on mathematical concepts adopted from information geometry.  The field equations we have considered are a generalization of the Einstein equations of standard general relativity, and differ from them in the empty space, as well as in the
presence of matter, due to the presence of a torsion vector. Hence, the predictions of the present theory  may lead,  to some important differences, as compared to the predictions of Einstein's general relativity, in several problems of current interest, like the  cosmology of the early and late Universe, black holes and  gravitational collapse or in the generation of gravitational waves. The detailed investigations of these physical, astrophysical and cosmological phenomena may also provide specific effects and signatures that could help in distinguishing and discriminating between the various existing gravitational  theories.
\section*{Acknowledgements}

The work of L.Cs. and M.J. is supported by Collegium Talentum Hungary. L.Cs. Would like to thank Xumin Liang, Damianos Iosifidis for the helpful discussions and for the StarUBB research scholarship.
 
\appendix

\section{Coordinate-free treatment of quasi-statistical manifolds}\label{appendixA}
In this Appendix we provide a geometric, coordinate-free approach to quasi-statistical manifolds and the results used in our manuscript. {Let us explicitly mention that most of the results are already present in the literature \cite{Blaga2022}. We present them for completeness, as we have to adapt them to our convention in coordinates, and to make the paper readable and self contained for the physics community.}
\begin{definition}
 Let $(M,g,\nabla)$ be a pseudo-Riemannian manifold. Two affine connections $\nabla$ and $\nabla^{*}$ are said to be \textbf{dual with respect to the metric} if
 \begin{equation}\label{dualconnection}
     X(g(Y,Z))=g\left( \nabla_X Y,Z \right)+ g\left( Y,\nabla_{X}^{*} Z \right)
 \end{equation}
 is satisfied for all vector fields $X,Y,Z$.
\end{definition}
\begin{remark}
    In a local chart with $X=\partial_\mu,Y=\partial_\nu,Z=\partial_\rho$ condition \eqref{dualconnection} takes the form
    \begin{equation}
        \partial_\mu g_{\nu \rho}=\Gamma_{\rho \nu \mu} + \Gamma^{*}_{\nu \rho \mu}.
    \end{equation}
\end{remark}

\begin{definition}
    A pseudo-Riemannian manifold $(M,g)$ equipped with a torsionful affine connection $\nabla$ is called a \textbf{quasi-statistical manifold} if
    \begin{equation}\label{thiscondition}
        \left( \nabla_X g \right)(Y,Z) - \left( \nabla_Y g \right)(X,Z) + g \left(T(X,Y),Z \right)=0
    \end{equation}
    for all vector fields $X,Y,Z$.
\end{definition}
\begin{remark}
    In case $\nabla$ is torsion-free, we recover
    \begin{equation}
        \left( \nabla_X g \right)(Y,Z)=\left( \nabla_Y g \right) (X,Z),
    \end{equation}
    which is the condition for the pair $(M,g,\nabla)$ to be a statistical manifold.
\end{remark}
\begin{remark}
    In a local chart with $X=\partial_\mu, Y=\partial_\nu,Z=\partial_\rho$ condition \eqref{thiscondition} is equivalent to
    \begin{equation}
        \nabla_{\mu} g_{\nu \rho} - \nabla_{\nu} g_{\mu \rho} + T_{\rho \mu \nu}=0.
    \end{equation}
\end{remark}
We now present a theorem, which relates the torsion and nonmetricity of a connection and its dual, but first let us formally introduce these two tensors in a geometric setting.
\begin{definition}
    The torsion of an affine connection $\nabla$ is the vector-valued tensor field defined by
    \begin{equation}
        T(X,Y)=\nabla_{X} Y -\nabla_{Y} X - [X,Y].
    \end{equation}
\end{definition}
\begin{definition}
    The nonmetricity of an affine connection $\nabla$ is the $(0,3)$-tensor field defined by
    \begin{equation}
        Q(X,Y,Z):=\left(-\nabla_X g \right)(Y,Z).
    \end{equation}
\end{definition}
{Although the upcoming theorem directly follows from the definitions of quasi-statistical manifolds, to the best of our knowledge, a detailed proof of it is not present in the literature. A statement of it, using different notations is present in \cite{Blaga2022}.}
\begin{theorem}\label{theoremdual}
    Let $(M,g)$ be a pseudo-Riemannian manifold equipped with two affine connections $(\nabla,\nabla^*)$, which are dual with respect to $g$. Then torsion $T^{*}$ and the nonmetricity $Q^{*}$ of the dual connection $\nabla^{*}$ satisfy the conditions:
    \begin{enumerate}
        \item[$(i)$] nonmetricity of the dual connection: \begin{equation}
        Q^{*}(X,Y,Z)=-Q(X,Y,Z).
        \end{equation}
        \item[$(ii)$] Torsion of the dual connection:
        \begin{equation}
        \begin{aligned}
        g \left(T^{*}(X,Y),Z \right)&=g\left( T(X,Y),Z \right)\\&- Q(X,Y,Z)+ Q(Y,X,Z).
        \end{aligned}
        \end{equation}
    \end{enumerate}
\end{theorem}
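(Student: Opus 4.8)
The plan is to prove both statements by direct expansion, reducing everything to the metric $g$, the base connection $\nabla$, and the duality identity \eqref{dualconnection}. The one genuinely useful observation is that, because $g$ is symmetric, the duality condition may be applied in \emph{either} ordering of its last two arguments: for all vector fields one has both
\begin{equation}
X\bigl(g(Y,Z)\bigr)=g(\nabla_X Y,Z)+g(Y,\nabla_X^{*}Z)
\end{equation}
and the same identity with $Y$ and $Z$ interchanged. This is what lets me express $g(\nabla_X^{*}Y,Z)$ through $\nabla$ even though $Y$ there sits in the first slot of $g$.

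For part $(i)$, I would begin from $Q^{*}(X,Y,Z)=-(\nabla_X^{*}g)(Y,Z)$ and expand the covariant derivative of the metric as $(\nabla_X^{*}g)(Y,Z)=X\bigl(g(Y,Z)\bigr)-g(\nabla_X^{*}Y,Z)-g(Y,\nabla_X^{*}Z)$. I then eliminate the two dual terms: the term $g(Y,\nabla_X^{*}Z)$ is replaced directly by the duality identity, while $g(\nabla_X^{*}Y,Z)$ requires the swapped ordering together with the symmetry of $g$. After substitution, the three occurrences of $X\bigl(g(Y,Z)\bigr)$ combine into a single one, and what survives is precisely $X\bigl(g(Y,Z)\bigr)-g(\nabla_X Y,Z)-g(Y,\nabla_X Z)=-Q(X,Y,Z)$, establishing $Q^{*}=-Q$.

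For part $(ii)$, I would expand $g(T^{*}(X,Y),Z)$ through the definition of torsion, obtaining $g(\nabla_X^{*}Y,Z)-g(\nabla_Y^{*}X,Z)-g([X,Y],Z)$, and replace each dual-connection term by its $\nabla$-expression via the swapped duality identity (always placing the differentiated field in the starred slot). Independently, I would expand the asserted right-hand side $g(T(X,Y),Z)-Q(X,Y,Z)+Q(Y,X,Z)$ purely in terms of $\nabla$ and $g$. The decisive cancellation is that the $g(\nabla_X Y,Z)$ and $g(\nabla_Y X,Z)$ terms produced by $g(T(X,Y),Z)$ are annihilated by the corresponding terms in $-Q(X,Y,Z)$ and $+Q(Y,X,Z)$; the remaining expression matches, termwise, the one obtained for $g(T^{*}(X,Y),Z)$.

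There is no conceptual obstacle here—the content is entirely tensorial and the Lie-bracket term $g([X,Y],Z)$ occurs identically on both sides, so no coordinate choice is needed. The sole risk is bookkeeping: one must maintain strict sign and slot discipline, applying the duality identity in the correct ordering and respecting the $X\leftrightarrow Y$ antisymmetrization that separates the torsion contributions from the non-metricity contributions. Carrying these substitutions through carefully is the only real work, and it is routine.
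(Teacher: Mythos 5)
Your proposal is correct and follows essentially the same route as the paper's proof: part $(i)$ is the identical computation (expand $Q^{*}=-\nabla^{*}g$, eliminate both starred terms via the duality identity applied in each slot ordering using the symmetry of $g$, and recombine into $(\nabla_X g)(Y,Z)$), and part $(ii)$ differs only in bookkeeping—you expand both sides to a common normal form, while the paper transforms the left side into the right via the Leibniz rule—but the substitutions and cancellations are the same.
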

\begin{proof}
To prove the first statement, we follow the definitions, and employ the Leibnitz rule two times:
\begin{multline}
      Q^*(X,Y,Z)=\left( - \nabla^*_X g \right)(Y,Z)\\
      =-X(g,(Y,Z))+ g \left( \nabla^*_X Y,Z \right) + g\left(Y,\nabla^*_X Z \right)\\
      =-g(\nabla_X Z,Y)+X(g(Y,Z))-g(\nabla_X Y,Z)\\
      =(\nabla_X g)(Y,Z)=-Q(X,Y,Z).
\end{multline}

   For the second part, as $\nabla$ is dual to $\nabla^{*}$, we have
   \begin{equation}\label{dual1}
       g\left( Y, \nabla^{*}_{X} Z \right)=X(g(Y,Z))-g\left( \nabla_X Y,Z \right),
   \end{equation}
   \begin{equation}\label{dual2}
g\left(Z,\nabla^{*}_{X}Y \right)=X(g(Z,Y))- g \left( \nabla_X Z,Y \right),
   \end{equation}
   \begin{equation}\label{dual3}
       g\left(Z,\nabla^*_Y X \right)=Y(g(Z,X))-\left(\nabla_Y Z,X \right).
   \end{equation}

   Thus, by employing the definition of torsion, we obtain
   \begin{equation}
       g\left(T^{*}(X,Y),Z \right)=g \left( \nabla^*_X Y - \nabla^*_Y X -[X,Y],Z\right).
   \end{equation}

   Expanding the brackets using multilinearity and equations \eqref{dual1},\eqref{dual2} and \eqref{dual3} yields
\begin{multline}
       g\left(T^{*}(X,Y),Z \right)=X(g(Z,Y))-g(\nabla_X Z,Y)\\
       -Y(g(Z,X))+g(\nabla_Y Z,X)-g([X,Y],Z),
\end{multline}
or equivalently
\begin{multline}
    g\left(T^{*}(X,Y),Z \right)= \nabla_X (g(Z,Y))- g\left( \nabla_X Z,Y \right) \\
    -\nabla_Y(g(Z,X)) + g \left( \nabla_Y Z,X \right) - g([X,Y],Z).
\end{multline}

Applying the Leibnitz rule, we get for the right hand side
\begin{multline}
    \textcolor{-green}{(\nabla_X g)(Y,Z) }\textcolor{red}{+g(\nabla_X Z,Y)} {+g(Z,\nabla_X Y)}
    \textcolor{red}{-g(\nabla_X Z,Y)}\\
   \textcolor{-green}{ - (\nabla_Y g)(Z,X) }\textcolor{blue}{- g(\nabla_Y Z,X)} {- g(Z,\nabla_Y X) }\textcolor{blue}{+ g(\nabla_Y Z,X)}\\
   {-g([X,Y],Z)}.
\end{multline}

We identify the orange terms with nonmetricity, by definition. The blue and red terms cancel, leaving us with
\begin{multline}
    g(T^*(X,Y),Z)=\textcolor{-green}{-Q(X,Y,Z)}+g(Z,\nabla_X Y)\\
\textcolor{-green}{+Q(Y,Z,X)} - g(Z,\nabla_Y X) - g([X,Y],Z).
\end{multline}

Using the definition of torsion
\begin{multline}
    g(T^*(X,Y),Z)=g(T(X,Y),Z)\\
    -Q(X,Y,Z)+Q(Y,Z,X)
\end{multline}
we obtain the desired result.

\end{proof}
\begin{corollary}
    In a local chart given by $X=\partial_\mu, Y=\partial_\nu, Z=\partial_\rho$, the nonmetricity and torsion of the dual affine connection satisfy
    \begin{equation}\label{dualcoeffs}
        Q^{*}_{\mu \nu \rho}=- Q_{\mu \nu \rho}, \; \; T^{*}_{\rho \mu \nu}=T_{\rho \mu \nu} -Q_{\mu \nu \rho} + Q_{\nu \mu \rho}.
    \end{equation}
\end{corollary}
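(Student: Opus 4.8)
The plan is to obtain this statement as a direct specialization of the coordinate-free Theorem~\ref{theoremdual} to a coordinate basis, matching the resulting expressions against the index conventions fixed earlier in the main text. First I would set $X=\partial_\mu$, $Y=\partial_\nu$, $Z=\partial_\rho$ and recall that coordinate vector fields commute, so that $[\partial_\mu,\partial_\nu]=0$. This makes the bracket term drop out of the definition $T(X,Y)=\nabla_X Y-\nabla_Y X-[X,Y]$, leaving $T(\partial_\mu,\partial_\nu)=\nabla_{\partial_\mu}\partial_\nu-\nabla_{\partial_\nu}\partial_\mu$, and similarly for $\nabla^{*}$.

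Next I would pin down the component expressions. Writing $\nabla_{\partial_\mu}\partial_\nu=\tensor{\Gamma}{^\lambda_\nu_\mu}\partial_\lambda$ consistently with the convention $g(\nabla_{\partial_\mu}\partial_\nu,\partial_\rho)=\Gamma_{\rho\nu\mu}$, one checks that $g(T(\partial_\mu,\partial_\nu),\partial_\rho)=(\tensor{\Gamma}{^\lambda_\nu_\mu}-\tensor{\Gamma}{^\lambda_\mu_\nu})\,g_{\lambda\rho}=T_{\rho\mu\nu}$, which agrees with the defining relation $\tensor{T}{^\lambda_\mu_\nu}=\tensor{\Gamma}{^\lambda_\nu_\mu}-\tensor{\Gamma}{^\lambda_\mu_\nu}$ together with the index-lowering rule $T_{\sigma\nu\rho}=g_{\sigma\mu}\tensor{T}{^\mu_\nu_\rho}$. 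In the same way, the definition $Q(X,Y,Z)=(-\nabla_X g)(Y,Z)$ gives $Q(\partial_\mu,\partial_\nu,\partial_\rho)=-\nabla_\mu g_{\nu\rho}=Q_{\mu\nu\rho}$, matching $\nabla_\mu g_{\nu\rho}=-Q_{\mu\nu\rho}$. It is also worth noting here that, since $g$ is symmetric, $Q$ is symmetric in its last two slots, so that $Q(Y,X,Z)\mapsto Q_{\nu\mu\rho}$ is unambiguous.

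With these identifications, part $(i)$ of Theorem~\ref{theoremdual} reads off immediately as $Q^{*}_{\mu\nu\rho}=-Q_{\mu\nu\rho}$. For part $(ii)$, substituting the coordinate basis into $g(T^{*}(X,Y),Z)=g(T(X,Y),Z)-Q(X,Y,Z)+Q(Y,X,Z)$ and extracting components yields $T^{*}_{\rho\mu\nu}=T_{\rho\mu\nu}-Q_{\mu\nu\rho}+Q_{\nu\mu\rho}$, which is exactly the claimed identity and reproduces equation~\eqref{torsiondual}.

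Since the corollary is purely a translation of the abstract theorem into indices, there is essentially no computational content, and the only step requiring genuine care — the place where a stray sign or a transposed index could slip in — is the bookkeeping of conventions: keeping the third (derivative) slot of $\Gamma_{\rho\nu\mu}$ in place, tracking which slot of $T_{\rho\mu\nu}$ carries the lowered index, and respecting the first-two-slot ordering of $Q$. I would cross-check the final expressions against equations~\eqref{conditionquasi} and \eqref{torsiondual} to confirm that the conventions used in the appendix and in the main text are in agreement.
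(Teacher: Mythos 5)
Your proposal is correct and matches the paper's intent exactly: the corollary is stated there as an immediate coordinate specialization of Theorem~\ref{theoremdual}, with no separate proof given, and your verification (commuting coordinate fields killing the bracket, the identifications $g(T(\partial_\mu,\partial_\nu),\partial_\rho)=T_{\rho\mu\nu}$ and $Q(\partial_\mu,\partial_\nu,\partial_\rho)=Q_{\mu\nu\rho}$, plus the symmetry of $Q$ in its last two slots) is precisely the routine bookkeeping the paper leaves implicit.
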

\begin{corollary}\label{corollarytorsionalmanifolds}
    Let $(M,g)$ be a pseudo-Riemannian manifold and $\left(\nabla,\nabla^{*} \right)$ be dual with respect to $g$. Moreover, denote with $T$ the torsion tensor of $\nabla$ and with $T^{*}$ the torsion tensor of $\nabla^{*}$. Then, the following statements hold:
    \begin{enumerate}
        \item[$(i)$] $T=0$ iff the pair $(M,g,\nabla^{*})$ is a quasi-statistical manifold.
        \item[$(ii)$]$T^{*}=0$ iff the pair $(M,g,\nabla)$ is a quasi-statistical manifold.
    \end{enumerate}
\end{corollary}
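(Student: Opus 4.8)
The plan is to read off both equivalences directly from the torsion formula of Theorem~\ref{theoremdual}$(ii)$, using only the non-degeneracy of $g$ and the fact that the duality relation is symmetric. No genuinely new computation is required, since the needed identity is already packaged in the theorem.

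First I would settle part $(ii)$. Starting from the coordinate-free quasi-statistical condition \eqref{thiscondition} for $(M,g,\nabla)$ and substituting $Q(X,Y,Z)=-(\nabla_X g)(Y,Z)$, the condition becomes
\begin{equation}
-Q(X,Y,Z)+Q(Y,X,Z)+g\left(T(X,Y),Z\right)=0.
\end{equation}
But by Theorem~\ref{theoremdual}$(ii)$ the left-hand side is exactly $g\left(T^{*}(X,Y),Z\right)$. Hence $(M,g,\nabla)$ is quasi-statistical if and only if $g\left(T^{*}(X,Y),Z\right)=0$ for all vector fields $X,Y,Z$. Because $g$ is non-degenerate, this vanishing for every $Z$ forces $T^{*}(X,Y)=0$ for every $X,Y$, i.e. $T^{*}=0$. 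This establishes the equivalence $T^{*}=0 \iff (M,g,\nabla)$ is quasi-statistical.

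For part $(i)$ I would exploit that the duality relation is involutive. Rewriting \eqref{dualconnection} using the symmetry of $g$ gives $X(g(Z,Y))=g\left(\nabla^{*}_X Z,Y\right)+g\left(Z,\nabla_X Y\right)$, which is precisely the statement that $\nabla$ is dual to $\nabla^{*}$; thus $(\nabla^{*})^{*}=\nabla$. Applying the already-proven part $(ii)$ to the dual pair $(\nabla^{*},\nabla)$ — where now the torsion of the dual connection is $T$ and the quasi-statistical role is played by $(M,g,\nabla^{*})$ — immediately yields $T=0 \iff (M,g,\nabla^{*})$ is quasi-statistical.

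The step needing the most care is not computational but structural: recognizing that Theorem~\ref{theoremdual}$(ii)$ reproduces the quasi-statistical condition \emph{verbatim}, so that both directions of each biconditional are simultaneous, and noticing that the symmetry $(\nabla^{*})^{*}=\nabla$ lets $(i)$ follow from $(ii)$ without repeating the argument. The non-degeneracy of $g$ is the one analytic ingredient that upgrades the vanishing of the scalar $g\left(T^{*}(X,Y),\cdot\right)$ to the vanishing of the torsion vector field itself.
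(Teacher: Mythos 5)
Your proof is correct and takes essentially the same route as the paper: both parts are read off from Theorem~\ref{theoremdual} plus the non-degeneracy of $g$, with the paper's own version of the argument (around Eq.~\eqref{torsiondual}) doing the same algebra by substituting $Q^{*}=-Q$ into the quasi-statistical condition for $\nabla^{*}$, which is exactly what you package as ``the theorem reproduces the condition verbatim.'' Your use of the involution $\left(\nabla^{*}\right)^{*}=\nabla$ to deduce $(i)$ from $(ii)$ is a tidy organizational shortcut rather than a genuinely different method, since the paper treats the two parts symmetrically by the same computation.
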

{This immediate consequence is also presented in \cite{Blaga2022}.}

{Let us now prove the well known fact, that for a statistical manifold, the average connection $\overline{\nabla}=\nabla+\nabla^{*}$ is the Levi-Civita connection $\overset{\circ}{\nabla}$.
\begin{proposition}
    Let $\nabla$ and $\nabla^{*}$ be dual connections on a pseudo-Riemannian manifold $(M,g)$. Then, the average connection
    \begin{equation}
        \overline{\nabla}:=\frac{1}{2}\left( \nabla+\nabla^* \right))
    \end{equation}
    is $g$-metrical, that is, $(\overline{\nabla}_{X}g)(Y,Z)=0$ for all vector fields $X,Y,Z$.
\end{proposition}
\begin{proof}
    By the Leibnitz rule, we have
    \begin{equation}
        \left(\overline{\nabla}_{X} g \right)(Y,Z)=X(g(Y,Z))-g \left( \overline{\nabla}_{X} Y,Z \right) - g \left( Y, \overline{\nabla}_{X} Z \right).
    \end{equation}
    Using the definition of the average connection $\overline{\nabla}$
\begin{equation}
\begin{aligned}
        \left(\overline{\nabla}_{X} g \right)(Y,Z)=&X(g(Y,Z))\\&-\frac{1}{2} \left(g(\nabla_X Y,Z)+g\left( \nabla^{*}_{X} Y,Z\right) \right)\\
        &-\frac{1}{2} \left(g(Y,\nabla_X Z)+g \left( Y,\nabla^{*}_{X} Z \right) \right).
\end{aligned}
\end{equation}
Regrouping the terms on RHS yields
\begin{equation}
\begin{aligned}
      &\frac{1}{2} \left(X(g(Y,Z))-g\left( \nabla_X Y,Z \right) - g\left( Y, \nabla^*_X Z \right) \right)\\
      +&\frac{1}{2}\left( X(g(Y,Z))-g \left( \nabla^*_X Y,Z  \right) - g(Y,\nabla_X Z) \right),
\end{aligned}
\end{equation}
which is identically zero, since $\nabla$ and $\nabla^{*}$ are dual connections. We thus have shown
\begin{equation}
    \left(\overline{\nabla}_X g \right)(Y,Z)=0+0=0.
\end{equation}
\end{proof}
\begin{proposition}
    If $(M,g,\nabla)$ is a statistical manifold, i.e. $\nabla$ is torsion-free and the cubic tensor $C(X,Y,Z)=(\nabla_X g)(Y,Z)$ is totally symmetric, then the dual connection $\nabla^{*}$ is torsion-free.
\end{proposition}
\begin{proof}
    By the the Leibnitz rule and the definition of cubic tensor, we immediately obtain
    \begin{equation}
        C(Z,X,Y)=Z(g(X,Y))-g(\nabla_Z X,Y) - g(X,\nabla_Z Y),
    \end{equation}
    \begin{equation}
        C(Y,Z,X)=Y(g(Z,X))-g(\nabla_Y Z,X) - g(Z,\nabla_Y X).
    \end{equation}
    Substracting  these two equations leads to
    \begin{equation}
    \begin{aligned}
        &C(Z,X,Y)-C(Y,Z,X)=Z(g(X,Y))-g(\nabla_Z X,Y)\\
        &- \underbrace{g(X,\nabla_Z Y)}_{1} - Y(g(Z,X))+\underbrace{g(\nabla_Y Z,X)}_{2}+g(Z,\nabla_Y X).
    \end{aligned}
    \end{equation}
    Using that $\nabla$ is torsion-free, terms $1$ and $2$ can be combined to give
    \begin{equation}
    \begin{aligned}
        &C(Z,X,Y)-C(Y,Z,X)=Z(g(X,Y))-Y(g(Z,X)) \\
        &-g(\nabla_Z X,Y) + g(Z,\nabla_Y X)-g(X,[Z,Y])..
    \end{aligned}
    \end{equation}
    Using the definition of torsion of $T^{*}$ yields
    \begin{equation}
        \begin{aligned}
           & C(X,Y,Z)-C(Y,Z,X)=\underbrace{Z(g(X,Y))}_{1}- \underbrace{Y(g(Z,X))}_{2}\\
           &\underbrace{-g(\nabla_Z X,Y)}_{1}+\underbrace{g(Z,\nabla_Y X)}_{2}-\underbrace{g(X,\nabla^*_Z Y)}_{1}\\
           &+\underbrace{g(X,\nabla^*_Y Z)}_{2} +g\left( X,T^{*}(Z,Y) \right).
        \end{aligned}
    \end{equation}
    Terms $1,2$ vanish by the definition of dual connections. We are thus left with
    \begin{equation}
        C(X,Y,Z)-C(Y,Z,X)=g(X,T^{*}(Z,Y)).
    \end{equation}
    Since $C$ is completely symmetric, we immediately get
    \begin{equation}
        g(X,T^{*}(Z,Y))=0,
    \end{equation}
    and as $g$ is non-degenerate, this leads to the desired result
    \begin{equation}
        T^*=0.
    \end{equation}
\end{proof}
}

\section{Computation of Ricci- and mutual difference tensors }\label{appendixB}
Recall that the Schrödinger connection is torsion-free and has nonmetricity given by
\begin{equation}
    Q_{\mu \nu \rho}=\pi_{\mu} g_{\nu \rho} - \frac{1}{2} \left( g_{\mu \nu} \pi_{\rho} + g_{\mu \rho} \pi_{\nu} \right).
\end{equation}

Relabeling indices, we immediately obtain
\begin{equation}
\begin{aligned}
    -Q_{\lambda \nu \rho}&=\textcolor{red}{-\pi_{\lambda} g_{\nu \rho}} \textcolor{blue}{+ \frac{1}{2} g_{\lambda \nu} \pi_\rho} \textcolor{orange}{ + \frac{1}{2}g_{\lambda \rho} \pi_{\nu}},\\
    Q_{\rho \lambda \nu}&=\textcolor{blue}{\pi_{\rho} g_{\lambda \nu}} \textcolor{orange}{ -\frac{1}{2} g_{\rho \lambda} \pi_{\nu} }\textcolor{red}{- \frac{1}{2}g_{\rho \nu} \pi_{\lambda}},\\
    Q_{\nu \rho \lambda}&= \textcolor{orange}{\pi_{\nu} g_{\rho \lambda} } \textcolor{red}{-\frac{1}{2} g_{\nu \rho} \pi_{\lambda}}  \textcolor{blue}{-\frac{1}{2} g_{\nu \lambda} \pi_{\rho}}.
\end{aligned}
\end{equation}

Hence, for the distortion tensor of Schrödinger geometry we have
\begin{equation}
    \tensor{N}{^\mu _\nu _\rho}=\frac{1}{2} g^{\lambda \mu} \left( \textcolor{red}{-2 \pi_\lambda g_{\nu \rho}}
    \textcolor{blue}{+ \pi_\rho g_{\lambda \nu}}  \textcolor{orange}{+ \pi_{\nu} g_{\rho \lambda}}\right),
\end{equation}
which can be equivalently rewritten as
\begin{equation}\label{distortion}
    \tensor{N}{^\mu_\nu _\rho}=-\pi^{\mu} g_{\nu \rho} + \frac{1}{2} \pi_\rho \delta^\mu_\nu + \frac{1}{2} \pi_\nu \delta^{\mu}_{\rho}.
\end{equation}

Following the same steps, one can show that the distortion tensor of the dual is
\begin{equation}\label{distortiondual}
\begin{aligned}
    \tensor{N}{^{*}^{\mu}_{\nu}_\rho}
    =-\frac{1}{2} \pi^{\mu} g_{\nu \rho} - \frac{1}{2} \pi_{\rho} \delta^{\mu}_{\nu} +  \pi_{\nu} \delta^{\mu}_{\rho}.
\end{aligned}
\end{equation}

Substracting  \eqref{distortiondual} from  \eqref{distortion} yields
\begin{equation}
    \tensor{K}{^\mu _\nu _\rho}=-\frac{1}{2} \pi^\mu g_{\nu \rho}+ \pi_{\rho} \delta^{\mu}_{\nu} -\frac{1}{2} \pi_{\nu} \delta^{\mu}_{\rho}.
\end{equation}

Hence, the mutual difference tensor is given by
\begin{equation}
    \tensor{D}{_\rho _\nu}=-\frac{1}{4} \pi_{\nu} \pi_{\rho} - \frac{1}{2} \pi_{\sigma} \pi^{\sigma} g_{\rho \nu},
\end{equation}
while the mutual difference scalar reads
\begin{equation}
    D=-\frac{9}{4} \pi_{\sigma} \pi^{\sigma}.
\end{equation}

We now move on to compute the Ricci scalar of the Schrödinger connection, and its dual, respectively. In the presence of distortion, the Ricci tensor of an affine connection can be written as
\begin{equation}\label{riccitens}
\begin{aligned}
      R_{\mu \nu}=\overset{\circ}{R}_{\mu \nu} &+\overset{\circ}{\nabla}_{\alpha} \tensor{N}{^\alpha _\nu _\mu}-\overset{\circ}{\nabla}_{\nu} \tensor{N}{^\alpha _\alpha _\mu}\\
      &+\tensor{N}{^\alpha_\alpha  _\rho} \tensor{N}{^\rho_\nu_\mu} - \tensor{N}{^\alpha _\nu _\rho}\tensor{N}{^\rho_\alpha_\mu}.
\end{aligned}
\end{equation}

We obtain the Ricci tensor of the Schrödinger connection by substituting  the distortion tensor \eqref{distortion} into \eqref{riccitens}
\begin{equation}
\begin{aligned}
R_{\mu \nu}=\overset{\circ}{R}_{\mu \nu} &- g_{\mu \nu} \overset{\circ}{\nabla}_{\alpha} \pi^{\alpha} + \frac{1}{2} \overset{\circ}{\nabla}_{\mu} \pi_{\nu} -\overset{\circ}{\nabla}_{\nu} \pi_{\mu} \\
&  - \frac{1}{2}
 \pi_{\rho} \pi^{\rho} g_{\mu \nu} - \frac{1}{4} \pi_{\mu} \pi_{\nu}.
 \end{aligned}
\end{equation}

It immediately follows that the Ricci scalar is given by
\begin{equation}
    R=\overset{\circ}{R}-\frac{9}{2} \overset{\circ}{\nabla}_{\alpha} \pi^{\alpha} - \frac{9}{4} \pi_{\alpha} \pi^{\alpha}.
\end{equation}

For the dual Schrödinger connection, we have
\begin{equation*}
\begin{aligned}
    R_{\mu \nu}^{\star}
        =\overset{\circ}{R}_{\mu \nu}&+\overset{\circ}{\nabla}_\alpha\left(-\frac{1}{2}\pi^{\alpha}g_{\nu\mu}-\frac{1}{2}\pi_{\mu}\delta_{\nu}^{\alpha}+\pi_{\nu}\delta_{\mu}^{\alpha}\right)\\
        &-\overset{\circ}{\nabla}_\nu \left(-\frac{1}{2}\pi^{\alpha}g_{\alpha\mu}-\frac{1}{2}\pi_{\mu}\delta_{\alpha}^{\alpha}+\pi_{\alpha}\delta_{\mu}^{\alpha}\right)\\
        &+\left(-\frac{1}{2}\pi^{\alpha}g_{\alpha\rho}-\frac{1}{2}\pi_{\rho}\delta_{\alpha}^{\alpha}+\pi_{\alpha}\delta_{\rho}^{\alpha}\right) \times\\
        &\times \left(-\frac{1}{2}\pi^{\rho}g_{\nu\mu}-\frac{1}{2}\pi_{\mu}\delta_{\nu}^{\rho}+\pi_{\nu}\delta_{\mu}^{\rho}\right)\\
        &-\left(-\frac{1}{2}\pi^{\alpha}g_{\nu\rho}-\frac{1}{2}\pi_{\rho}\delta_{\nu}^{\alpha}+\pi_{\nu}\delta_{\rho}^{\alpha}\right)  \times\\
        &\times \left(-\frac{1}{2}\pi^{\rho}g_{\alpha\mu}-\frac{1}{2}\pi_{\mu}\delta_{\alpha}^{\rho}+\pi_{\alpha}\delta_{\mu}^{\rho}\right)\\
        =\overset{\circ}{R}_{\mu \nu}&-\frac{1}{2}\overset{\circ}{\nabla}_\alpha\pi^{\alpha}g_{\nu\mu}-\frac{1}{2}\overset{\circ}{\nabla}_\nu\pi_{\mu}+\overset{\circ}{\nabla}_\mu\pi_{\nu}\\
        &+\frac{1}{2}\overset{\circ}{\nabla}_\nu \pi_{\mu}+2\overset{\circ}{\nabla}_\nu \pi_{\mu}-\overset{\circ}{\nabla}_\nu \pi_{\mu}\\
        &+\frac{1}{2}\pi^{\alpha}g_{\alpha\rho}\frac{1}{2}\pi^{\rho}g_{\nu\mu}+\frac{1}{2}\pi_{\rho}\delta_{\alpha}^{\alpha}\frac{1}{2}\pi^{\rho}g_{\nu\mu}\\
        &-\pi_{\alpha}\delta_{\rho}^{\alpha}\frac{1}{2}\pi^{\rho}g_{\nu\mu}
        +\frac{1}{2}\pi^{\alpha}g_{\alpha\rho}\frac{1}{2}\pi_{\mu}\delta_{\nu}^{\rho}\\
        &+\frac{1}{2}\pi_{\rho}\delta_{\alpha}^{\alpha}\frac{1}{2}\pi_{\mu}\delta_{\nu}^{\rho}-\pi_{\alpha}\delta_{\rho}^{\alpha}\frac{1}{2}\pi_{\mu}\delta_{\nu}^{\rho}\\
        &-\frac{1}{2}\pi^{\alpha}g_{\alpha\rho}\pi_{\nu}\delta_{\mu}^{\rho}-\frac{1}{2}\pi_{\rho}\delta_{\alpha}^{\alpha}\pi_{\nu}\delta_{\mu}^{\rho}\\
        &+\pi_{\alpha}\delta_{\rho}^{\alpha}\pi_{\nu}\delta_{\mu}^{\rho}
        -\frac{1}{2}\pi^{\alpha}g_{\nu\rho}\frac{1}{2}\pi^{\rho}g_{\alpha\mu}\\
        &-\frac{1}{2}\pi_{\rho}\delta_{\nu}^{\alpha}\frac{1}{2}\pi^{\rho}g_{\alpha\mu}+\pi_{\nu}\delta_{\rho}^{\alpha}\frac{1}{2}\pi^{\rho}g_{\alpha\mu}\\
        &-\frac{1}{2}\pi^{\alpha}g_{\nu\rho}\frac{1}{2}\pi_{\mu}\delta_{\alpha}^{\rho}-\frac{1}{2}\pi_{\rho}\delta_{\nu}^{\alpha}\frac{1}{2}\pi_{\mu}\delta_{\alpha}^{\rho}\\
        &+\pi_{\nu}\delta_{\rho}^{\alpha}\frac{1}{2}\pi_{\mu}\delta_{\alpha}^{\rho}
        +\frac{1}{2}\pi^{\alpha}g_{\nu\rho}\pi_{\alpha}\delta_{\mu}^{\rho}\\
        &+\frac{1}{2}\pi_{\rho}\delta_{\nu}^{\alpha}\pi_{\alpha}\delta_{\mu}^{\rho}-\pi_{\nu}\delta_{\rho}^{\alpha}\pi_{\alpha}\delta_{\mu}^{\rho}\\
        &=\overset{\circ}{R}_{\mu \nu}-\frac{1}{2}g_{\mu\nu}\overset{\circ}{\nabla}_\alpha\pi^{\alpha}+\overset{\circ}{\nabla}_\mu\pi_{\nu}+\overset{\circ}{\nabla}_\nu \pi_{\mu}\\
        &+\frac{1}{4}\pi_{\alpha}\pi^{\alpha}g_{\mu\nu}+\pi_{\rho}\pi^{\rho}g_{\mu\nu}-\frac{1}{2}\pi_{\rho}\pi^{\rho}g_{\mu\nu}\\
        &+\frac{1}{4}\pi_{\nu}\pi_{\mu}+\pi_{\nu}\pi_{\mu}-\frac{1}{2}\pi_{\nu}\pi_{\mu}\\
        &-\frac{1}{2}\pi_{\mu}\pi_{\nu}-2\pi_{\mu}\pi_{\nu}+\pi_{\mu}\pi_{\nu}\\
        &-\frac{1}{4}\pi_{\mu}\pi_{\nu}-\frac{1}{4}\pi_{\rho}\pi^{\rho}g_{\nu\mu}+\frac{1}{2}\pi_{\nu}\pi_{\mu}\\
        &-\frac{1}{4}\pi_{\nu}\pi_{\mu}-\frac{1}{4}\pi_{\nu}\pi_{\mu}+2\pi_{\nu}\pi_{\mu}\\
        &+\frac{1}{2}\pi_{\alpha}\pi^{\alpha}g_{\nu\mu}+\frac{1}{2}\pi_{\mu}\pi_{\nu}-\pi_{\nu}\pi_{\mu}\\
        =\overset{\circ}{R}_{\mu \nu}&-\frac{1}{2}g_{\mu\nu}\overset{\circ}{\nabla}_\alpha\pi^{\alpha}+\overset{\circ}{\nabla}_\mu\pi_{\nu}+\overset{\circ}{\nabla}_\nu \pi_{\mu}\\
        &+\pi_{\rho}\pi^{\rho}g_{\mu\nu}+\frac{1}{2}\pi_{\mu}\pi_{\nu}.
\end{aligned}
\end{equation*}
\twocolumngrid
As a summary, for the Ricci tensor of the dual connection we obtain
\begin{equation}
\begin{aligned}
    R^{*}_{\mu \nu}=\overset{\circ}{R}_{\mu \nu} &- \frac{1}{2} g_{\mu \nu} \overset{\circ}{\nabla}_{\alpha} \pi^{\alpha}+\overset{\circ}{\nabla}_{\mu} \pi_{\nu} + \overset{\circ}{\nabla}_{\nu} \pi_{\mu}\\
    &+ \pi_{\rho} \pi^{\rho} g_{\mu \nu} + \frac{1}{2} \pi_{\mu} \pi_{\nu}.
\end{aligned}
\end{equation}

It immediately follows that the dual Ricci scalar is given by
\begin{equation}
R^{*}=\overset{\circ}{R}+ \frac{9}{2} \pi^{\rho} \pi_{\rho}.
\end{equation}

\section{Derivation of the Friedmann equations} \label{appendixC}
First, recall that the non-zero components of the Ricci tensor of the Levi-Civita connection are given by
\begin{equation}
    \overset{\circ}{R}_{00}=-3\frac{\ddot a}{a}, \; \; \overset{\circ}{R}_{11}=\overset{\circ}{R}_{22}=\overset{\circ}{R}_{33}=a \ddot a +2 \dot a^2.
\end{equation}

Similarly, the Ricci scalar takes the well known form
\begin{equation}
    \overset{\circ}{R}=6 \left(\frac{ \ddot a}{a}+\frac{\dot a ^2}{a^2} \right).
\end{equation}

The non-vanishing Christoffel symbols are given by
\begin{equation}
    \tensor{\gamma}{^0_i_j}=a \dot{a} \delta_{ij}, \; \; i,j=1,2,3;
\end{equation}
\begin{equation}
    \tensor{\gamma}{^i_0_j}=\frac{\dot a}{a} \delta^{i}_{j}, \; \; i,j=1,2,3.
\end{equation}

As a first step, we will write down the Einstein equation in $00$ components
\begin{equation}
\begin{aligned}
    \overset{\circ}{R}_{00}& - \frac{1}{2} g_{00} \overset{\circ}{R} + \frac{3}{8} g_{00} \overset{\circ}{\nabla}_\alpha \pi^{\alpha} + \frac{3}{8} \overset{\circ}{\nabla}_0 \pi_0 + \frac{3}{8} \overset{\circ}{\nabla}_{0} \pi_0\\
    &+ \frac{5}{16} g_{00} \pi_\alpha \pi^\alpha - \frac{1}{8} \pi_0 \pi_0= 8 \pi T_{00}.
    \end{aligned}
\end{equation}

From these, we obtain
\begin{equation}
    3 \frac{\dot a^2}{a^2} - \frac{3}{8} \left( \dot \psi + 3 \frac{ \dot a}{a} \psi \right) -\frac{3}{8} \dot \psi - \frac{3}{8} \dot \psi +\frac{5}{16} \psi^2 - \frac{1}{8} \psi^2=8 \pi \rho.
\end{equation}

We thus have
\begin{equation}
    3\frac{\dot a^2}{a^2} -\frac{9}{8} \dot \psi - \frac{9}{8} \frac{\dot a}{a} \psi+\frac{3}{16} \psi^2=8 \pi \rho.
\end{equation}

For the second Friedmann equation, we write down the $ii$ components
\begin{equation}
\begin{aligned}
    \overset{\circ}{R}_{ii} &- \frac{1}{2} g_{ii} \overset{\circ}{R} + \frac{3}{8} g_{ii} \overset{\circ}{\nabla}_{\alpha} \pi^{\alpha} + \frac{3}{8} \overset{\circ}{\nabla}_{i} \pi_i + \frac{3}{8} \overset{\circ}{\nabla}_i \pi_i \\
    &+\frac{5}{16} g_{ii} \pi_{\alpha} \pi^{\alpha} - \frac{1}{8} \pi_i \pi_i=8 \pi T_{ii}.
\end{aligned}
\end{equation}

This immediately yields
\begin{equation}
\begin{aligned}
    -2 a \ddot a - \dot a^2 &+ \frac{3}{8}a^2 \left( \dot \psi + 3 \frac{\dot a}{a} \psi \right)\\
    &+ \frac{3}{8} a \dot a \psi + \frac{3}{8} a \dot a \psi - \frac{5}{16} a^2 \psi^2=8 \pi p a^2.
\end{aligned}
\end{equation}

Simplifying, we get
\begin{equation}
    -2 a \ddot a - \dot a^2+\frac{3}{8} a^2 \dot \psi +\frac{15}{8} a \dot a \psi - \frac{5}{16} a^2 \psi^2 = 8 \pi p a^2.
\end{equation}

Dividing by $a^2$ leads to
\begin{equation}
    -2\frac{ \ddot a}{a} - \frac{\dot a^2}{a^2} + \frac{3}{8} \dot \psi + \frac{15}{8} \frac{\dot a}{a} \psi - \frac{5}{16} \psi^2=8\pi p.
\end{equation}

Introducing the Hubble function, we obtain
\begin{equation}
    -2 \dot H -3H^2 + \frac{3}{8} \dot \psi + \frac{15}{8} H \psi - \frac{5}{16} \psi^2=8 \pi p.
\end{equation}

The final form is thus
\begin{equation}
    2 \dot H + 3 H^2= -8 \pi p + \frac{3}{8} \dot \psi + \frac{15}{8} H \psi - \frac{5}{16} \psi^2.
\end{equation}
\newpage
\bibliographystyle{unsrt}

\end{document}